\newtheorem{theo}{Theorem}[section]
\newtheorem{defi}[theo]{Definition}
\newtheorem{coro}[theo]{Corollary}
\newtheorem{prop}[theo]{Proposition}
\newtheorem{lemm}[theo]{Lemma}
\newcommand{\ZZ}{{\mathbb{Z}}}
\newcommand{\NN}{{\mathbb{N}}}
\newcommand{\emm}{\mathfrak{m}}
\title{$\ZZ_2$-double cyclic codes\thanks{%
  This work has been partially supported by the Spanish MICINN grant TIN2013-40524-P and by the Catalan grant 2009SGR1224.}}
\author{ Joaquim Borges Ayats\\Department of Information and Communication Engineering \\
Universitat Aut\`onoma de Barcelona \\ 08193-Bellaterra, Spain \\ Cristina Fern\'andez-C\'ordoba  \\Department of Information and Communication Engineering \\
Universitat Aut\`onoma de Barcelona \\ 08193-Bellaterra, Spain \\
and  \\ Roger Ten-Valls  \\Department of Information and
Communication Engineering \\
Universitat Aut\`onoma de Barcelona \\ 08193-Bellaterra, Spain
}
\date{\today}
\begin{document}

\maketitle

\begin{abstract}
A binary linear code $C$ is a $\mathbb{Z}_2$-double cyclic code if the set of coordinates can be partitioned into two subsets such that any cyclic shift of the coordinates of both subsets leaves invariant the code. These codes can be identified as submodules of the $\mathbb{Z}_2[x]$-module $\mathbb{Z}_2[x]/(x^r-1)\times\mathbb{Z}_2[x]/(x^s-1).$ We determine the structure of $\mathbb{Z}_2$-double cyclic codes giving the generator polynomials of these codes. The related polynomial representation of $\mathbb{Z}_2$-double cyclic codes and its duals, and the relations between the polynomial generators of these codes are studied.
\end{abstract}

\section{Introduction}

Let $\mathbb{Z}_2$ be the ring of integers modulo $2$. Let $\mathbb{Z}_2^n$ denote the set of all binary vectors of length $n$. Any non-empty subset of $\mathbb{Z}_2^n$ is a binary code and a subgroup of $\mathbb{Z}_2^n$ is called a \textit{binary linear code}. In this paper we introduce a subfamily of binary linear codes, called \textit{$\mathbb{Z}_2$-double cyclic codes}, with the property that the set of coordinates can be partitioned into two subsets, the first $r$ coordinates and the last $s$ coordinates, such that any cyclic shift of the coordinates of both subsets of a codeword is also a codeword.

Notice that if one of these sets of coordinates is empty, for example $r=0$, then we obtain a binary cyclic code of length $s$. So, binary cyclic codes are a special class of $\mathbb{Z}_2$-double cyclic codes. Most of the theory of binary cyclic codes can be found in \cite{macwilliams}. Another special case is when $r=s$, where a $\mathbb{Z}_2$-double cyclic code is permutation equivalent to a quasi-cyclic code of index 2 and even length (see \cite{macwilliams}).

In recent times, $\mathbb{Z}_2\mathbb{Z}_4$-additive codes have been studied (see \cite{AddDual}, \cite{Z2Z4RK}). For $\mathbb{Z}_2\mathbb{Z}_4$-additive codes, the set of coordinates is partitioned into two subsets, the first one of binary coordinates and the second one of quaternary coordinates. The simultaneous cyclic shift of the subsets of coordinates of a codeword has been first defined in  \cite{Abu}, that studies $\mathbb{Z}_2\mathbb{Z}_4$-additive cyclic codes and these codes can be identified as $\mathbb{Z}_4[x]$-modules of a certain ring. The duality of $\ZZ_2\ZZ_4$-additive cyclic codes is studied in \cite{Z2Z4CDual}.

The aim of this paper is the study of the algebraic structure of $\mathbb{Z}_2$-double cyclic codes and their dual codes. It is organized as follows. In Section \ref{sec:def}, we give the definition of $\mathbb{Z}_2$-double cyclic codes, we find the relation between some canonical projections of these codes and binary cyclic codes and we present the $\mathbb{Z}_2[x]$-module $\mathbb{Z}_2[x]/(x^r-1)\times\mathbb{Z}_2[x]/(x^s-1)$, denoted by $R_{r,s}$. In Section \ref{sec:AlgStru&Gen}, we discuss about the algebraic structure of a $\mathbb{Z}_2$-double cyclic code and we state some relations between its generators. In Section \ref{sec:Duality}, we study the concept of duality and, given a $\mathbb{Z}_2$-double cyclic code, we determine the generators of the dual code in terms of the generators of the code.

\section{$\ZZ_2$-double cyclic codes}\label{sec:def}
Let $C$ be a binary code of length $n$. Let $r$ and $s$ be integers such that $n=r+s$. We consider a partition of the set of the $n$ coordinates into two subsets of $r$ and $s$ coordinates, respectively, so that $C$ is a subset of $\mathbb{Z}_2^{r}\times\mathbb{Z}_2^{s}$.

\begin{defi}
Let $C$ be a binary linear code of length $n=r+s$. The code $C$ is called \emph{$\mathbb{Z}_2$-double cyclic} if 
$$ (u_0, u_1,\dots, u_{r-2}, u_{r-1}\mid  u'_0,u'_1,\dots, u'_{s-2}, u'_{s-1})\in C $$ 
implies 
$$(u_{r-1}, u_0, u_1,\dots, u_{r-2}\mid u'_{s-1},u'_0,u'_1,\dots, u'_{s-2})\in C.$$
\end{defi}

Let ${\bf u}=(u_{0},u_{1},\dots,u_{r-1}\mid u'_{0},\dots,u'_{s-1})$ be a codeword in $C$ and $i$ be an integer, then we denote by $${\bf u}^{(i)}=(u_{0+i},u_{1+i},\dots,u_{r-1+i}\mid u'_{0+i},\dots,u'_{s-1+i})$$ 
the $i$th shift of ${\bf u}$, where the subscripts are read modulo $r$ and $s$, respectively.

Let $C_r$ be the canonical projection of $C$ on the first $r$ coordinates and $C_s$ on the last $s$ coordinates. The canonical projection is a linear map. Then, $C_r$ and $C_s$ are binary cyclic codes of length $r$ and $s$, respectively. A code $C$ is called \textit{separable} if $C$ is the direct product of $C_r$ and $C_s$.

There is a bijective map between $\mathbb{Z}_2^r\times \mathbb{Z}_2^s$ and $\mathbb{Z}_2[x]/(x^r-1)\times\mathbb{Z}_2[x]/(x^s-1)$ given by:
{\small $$(u_0, u_1,\dots, u_{r-1}\mid  u'_0,\dots, u'_{s-1})\mapsto (u_0+ u_1x+\dots +  u_{r-1}x^{r-1}\mid  u'_0+\dots +u'_{s-1}x^{s-1}).$$}

We denote the image of the vector ${\bf u}$ by ${\bf u}(x).$
\begin{defi}
Denote by $R_{r,s}$ the ring $\mathbb{Z}_2[x]/(x^r-1)\times\mathbb{Z}_2[x]/(x^s-1)$. We define the operation
$$\star : \mathbb{Z}_2[x]\times R_{r,s}\rightarrow R_{r,s}$$
as
$$\lambda(x)\star (p(x)\mid  q(x))= (\lambda(x)p(x) \mid  \lambda(x)q(x)),$$
where $\lambda(x)\in\mathbb{Z}_2[x]$ and $(p(x)\mid  q(x))\in R_{r,s}$. 
\end{defi}

The ring $R_{r,s}$ with the external operation $\star$ is a $\mathbb{Z}_2[x]$-module. Let ${\bf u}(x)=(u(x)\mid u'(x))$ be an element of $ R_{r,s}$. Note that if we operate ${\bf u}(x)$ by $x$ we get
\begin{align*}
x\star{\bf u}(x)&= x\star(u(x)\mid u'(x))\\
&= (u_0x+\dots+  u_{r-2}x^{r-1} +  u_{r-1}x^{r}\mid  u'_0x+\dots +  u'_{s-2}x^{s-1}+u'_{s-1}x^{s})\\
&=( u_{r-1} + u_0x+\dots+  u_{r-2}x^{r-1} \mid  u'_{s-1} + u'_0x+\dots +  u'_{s-2}x^{s-1}).
\end{align*}
Hence, $x\star{\bf u}(x)$ is the image of the vector ${\bf u}^{(1)}$. Thus, the operation of ${\bf u}(x)$ by $x$ in $R_{r,s}$ corresponds to a shift of ${\bf u}$. In general, $x^i\star {\bf u}(x)={\bf u}^{(i)}(x)$ for all $i$.

\section{Algebraic structure and generators}\label{sec:AlgStru&Gen}

In this section, we study submodules of $R_{r,s}$. We describe the generators of such submodules and state some properties. From now on, $\langle S\rangle$ will denote the submodule generated by a subset $S$ of $R_{r,s}.$

\begin{theo}\label{Z2submodulesform}
The $\mathbb{Z}_2[x]$-module $R_{r,s}$ is a noetherian $\mathbb{Z}_2[x]$-module, and every submodule $C$ of $R_{r,s}$ can be written as
$$C=\langle(b(x)\mid { 0}),(\ell(x)\mid  a(x))\rangle,$$
where $b(x), \ell(x)\in\mathbb{Z}_2[x]/(x^r-1)$ with $b(x)\mid (x^r-1)$ and $a(x)\in \mathbb{Z}_2[x]/(x^s-1)$ with $a(x)\mid (x^s-1)$.
\end{theo}

\begin{proof}
Let $\pi_r : R_{r,s} \rightarrow \ZZ_2[x]/(x^r-1)$ and $\pi_s : R_{r,s} \rightarrow \ZZ_2[x]/(x^s-1)$ be the canonical projections, let $C$ be a submodule of $R_{r,s}$.\\
As $\ZZ_2[x]/(x^s-1)$ is noetherian then $C_s= \pi_s(C)$ is finitely generated.\\
Define $C'=\{ (p(x)|q(x))\in C\mid q(x)=0\}$. It is easy to check that $C'\cong \pi_r(C')$ by $(p(x)\mid { 0})\mapsto p(x)$. Hence $\ZZ_2[x]/(x^r-1)$ is noetherian, $\pi_r(C')$ is finitely generated and so is $C'$.\\
Let $b(x)$ be a generator of $\pi_r(C')$, then $b(x)\mid (x^r-1)$ and $(b(x)\mid { 0})$ is a generator of $C'$.
Let $a(x)\in C_s$ such that $C_s = \langle a(x)\rangle,$ then $a(x)\mid (x^s-1)$ and there exists $\ell(x)\in\ZZ_2[x]/(x^r-1)$ such that $(\ell(x)\mid a(x))\in C.$\\
We claim that 
$$C=\langle(b(x)\mid { 0}),(\ell(x)\mid  a(x))\rangle.$$
Let $(p(x)\mid q(x))\in C$, then $q(x)= \pi_s((p(x)\mid q(x)))\in C_s.$ So, there exists $\lambda(x)\in \ZZ_2[x]$ such that $q(x)=  \lambda(x)a(x)$. Now,
$$(p(x)\mid q(x)) - \lambda(x)\star(\ell(x)\mid a(x)) = (p(x)-\lambda(x)\ell(x)\mid { 0})\in C'.$$
Then there exists $\mu(x)\in \ZZ_2[x]$ such that $(p(x)-\lambda(x)\ell(x)\mid { 0})= \mu(x)\star(b(x)\mid { 0}).$ Thus,
$$(p(x)\mid q(x))= \mu(x)\star(b(x)\mid { 0}) + \lambda(x)\star(\ell(x)\mid a(x)).$$
So, $C$ is finitely generated by $\langle(b(x)\mid { 0}),(\ell(x)\mid  a(x))\rangle$ and then $R_{r,s}$ is a noetherian $\ZZ_2[x]$-module.
\end{proof}

From the previous results, it is clear that we can identify $\mathbb{Z}_2$-double cyclic codes in $\mathbb{Z}_2^{r}\times\mathbb{Z}_2^{s}$ as submodules of $R_{r,s}$. So, any submodule of $R_{r,s}$ is a $\mathbb{Z}_2$-double cyclic code.

Note that if $C$ is a $\mathbb{Z}_2$-double cyclic code with $C=\langle(b(x)\mid { 0}),(\ell(x)\mid  a(x))\rangle$, then the canonical projections $C_r$ and $C_s$ are binary cyclic codes generated by $\gcd(b(x),\ell(x))$ and $a(x)$, respectively.

On the one hand, we have seen that $R_{r,s}$ is a $\ZZ_2[x]$-module, and multiply by $x\in\ZZ_2[x]$ is the right shift on the vector space $\ZZ_2^r\times \ZZ_2^s$. On the other hand, we have that $\ZZ_2^r\times \ZZ_2^s$ is a $\ZZ_2$-module, where the operations are addition and multiplication by elements of $\ZZ_2$.

So, our goal now is to find a set of generators for $C$ as a $\ZZ_2$-module. We will denote the $\ZZ_2$-linear combinations of elements of a subset $S\subseteq R_{r,s}$ by $\langle S\rangle_{\ZZ_2}=\{\sum_i \lambda_i s_i \mid \lambda_i\in \ZZ_2, s_i\in S\}$, and we will call a set $S$ a \emph{$\ZZ_2$-linear independent} set if the relation 
$\sum_i \lambda_is_i=0$
implies that $\lambda_is_i=0$ for all $i$.

\begin{prop}\label{spaningset}
Let $C=\langle(b(x)\mid { 0}),(\ell(x)\mid  a(x))\rangle$ be a $\mathbb{Z}_2$-double cyclic code. Define the sets
$$S_1=\{(b(x)\mid 0), x\star(b(x)\mid 0),\dots, x^{r-\deg(b(x))-1}\star(b(x)\mid 0)\},$$
$$S_2=\{(\ell(x)\mid a(x)), x\star(\ell(x)\mid a(x)),\dots, x^{s-\deg(a(x))-1}\star(\ell(x)\mid a(x))\}. $$
Then, $S_1\cup S_2$ forms a generating set for $C$ as a $\ZZ_2$-module.
\end{prop}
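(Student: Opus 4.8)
The plan is to set $M=\langle S_1\cup S_2\rangle_{\ZZ_2}$ and to prove $M=C$. The inclusion $M\subseteq C$ is immediate, since every element listed in $S_1\cup S_2$ has the form $x^i\star(b(x)\mid 0)$ or $x^i\star(\ell(x)\mid a(x))$ and hence lies in $C$. For the reverse inclusion I would invoke Theorem~\ref{Z2submodulesform}: every element of $C$ equals $\mu(x)\star(b(x)\mid 0)+\lambda(x)\star(\ell(x)\mid a(x))$ for some $\mu(x),\lambda(x)\in\ZZ_2[x]$. Since $M$ is a $\ZZ_2$-subspace, to conclude $C\subseteq M$ it suffices to check that $M$ contains every $\ZZ_2[x]$-multiple of each of the two generators; and because $\lambda(x)\star\vv=\sum_i\lambda_i\,(x^i\star\vv)$ is a $\ZZ_2$-combination of shifts of $\vv$, it is in turn enough to show that $M$ is closed under the shift $\vv\mapsto x\star\vv$ and contains the two generators $(b(x)\mid 0)$ and $(\ell(x)\mid a(x))$.

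The first ingredient is the standard cyclic-code basis fact applied coordinatewise. Because $b(x)\mid(x^r-1)$ with $d_b=\deg b(x)$, the elements of $S_1$ are the images of $b(x),xb(x),\dots,x^{r-d_b-1}b(x)$, which are $\ZZ_2$-linearly independent (distinct leading degrees, all below $r$) and span the ideal $\langle b(x)\rangle$ of $\ZZ_2[x]/(x^r-1)$; in particular the wrapped top shift $x^{r-d_b}\star(b(x)\mid 0)$ reduces modulo $x^r-1$ to a $\ZZ_2$-combination of the elements of $S_1$. Hence $\langle S_1\rangle_{\ZZ_2}$ is shift-closed and equals the full set of $\ZZ_2[x]$-multiples of $(b(x)\mid 0)$, which is precisely the subcode $C'=\{(p(x)\mid q(x))\in C\mid q(x)=0\}$ identified in the proof of Theorem~\ref{Z2submodulesform}.

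The only genuinely new point, and the main obstacle, is the shift-closure of the $S_2$-part, i.e.\ the wrap-around term $x^{s-d_a}\star(\ell(x)\mid a(x))$ with $d_a=\deg a(x)$. Applying the same cyclic-code fact in the second coordinate gives scalars $c_0,\dots,c_{s-d_a-1}\in\ZZ_2$ with $x^{s-d_a}a(x)\equiv\sum_j c_j x^j a(x)\pmod{x^s-1}$. Subtracting $\sum_j c_j\,\bigl(x^j\star(\ell(x)\mid a(x))\bigr)\in\langle S_2\rangle_{\ZZ_2}$ from $x^{s-d_a}\star(\ell(x)\mid a(x))$ annihilates the second coordinate, so the difference is an element of $C$ of the form $(p(x)\mid 0)$; it therefore lies in $C'$ and, by the previous paragraph, in $\langle S_1\rangle_{\ZZ_2}$. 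This places $x^{s-d_a}\star(\ell(x)\mid a(x))$ in $M$. Combined with the shift-closure of $\langle S_1\rangle_{\ZZ_2}$ and the fact that lower shifts of both generators already lie in $S_1\cup S_2$, this shows $M$ is closed under $x\star$, and hence $C\subseteq M$ by the reduction of the first paragraph. I expect the only care needed is the bookkeeping in this wrap-around step and, implicitly, the degree hypotheses $d_a\le s$ and $d_b\le r$ that keep the shift counts non-negative; everything else is the routine ideal-basis argument for cyclic codes.
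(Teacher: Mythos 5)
Your proof is correct, and within the paper's conventions it is complete, but it is organized differently from the paper's argument. The paper proceeds by direct reduction of coefficients: it writes a codeword as $p_1(x)\star(b(x)\mid 0)+p_2(x)\star(\ell(x)\mid a(x))$, divides $p_1(x)$ by $\frac{x^r-1}{b(x)}$ and $p_2(x)$ by $\frac{x^s-1}{a(x)}$, lets the remainders produce elements of $\langle S_1\rangle_{\ZZ_2}$ and $\langle S_2\rangle_{\ZZ_2}$, and absorbs the leftover quotient term $\bigl(q_2(x)\frac{x^s-1}{a(x)}\ell(x)\mid 0\bigr)$ into $\langle S_1\rangle_{\ZZ_2}$ via the divisibility $b(x)\mid\frac{x^s-1}{a(x)}\ell(x)$ of Proposition \ref{BdiviLXS/A} (a forward reference in the paper). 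You instead show that $M=\langle S_1\cup S_2\rangle_{\ZZ_2}$ is closed under $x\star$, so that $M$ is a $\ZZ_2[x]$-submodule containing both generators and hence all of $C$; this replaces the division algorithm by an induction on shifts. The mathematical core, however, is identical: your wrap-around element $x^{s-\deg(a(x))}\star(\ell(x)\mid a(x))-\sum_j c_j\,x^j\star(\ell(x)\mid a(x))$ is exactly $\frac{x^s-1}{a(x)}\star(\ell(x)\mid a(x))=\bigl(\frac{x^s-1}{a(x)}\ell(x)\mid 0\bigr)$, because the polynomial $x^{s-\deg(a(x))}-\sum_j c_j x^j$ is monic of degree $s-\deg(a(x))$ and annihilates $a(x)$ modulo $x^s-1$, hence equals $\frac{x^s-1}{a(x)}$; this is the very codeword the paper uses to prove Proposition \ref{BdiviLXS/A}, and your identification $\langle S_1\rangle_{\ZZ_2}=C'$ rests on the same fact ($\ker(\pi)=\langle(b(x)\mid 0)\rangle$) that underlies that proposition. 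So what your route buys is the avoidance of explicit division-with-remainder and of the forward reference, at the cost of re-deriving that key fact inside the closure step. One caveat applies equally to you and to the paper: the assertion that every codeword of the form $(p(x)\mid 0)$ lies in $\langle(b(x)\mid 0)\rangle$ is guaranteed only for the canonical generators produced by Theorem \ref{Z2submodulesform}, not for an arbitrary generating pair satisfying the divisibility conditions, so the proposition and both proofs must be read under that convention.
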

\begin{proof}
It is easy to check that the codewords of $S_1\cup S_2$ are $\ZZ_2$-linear independent.

Let $c(x)\in C$, such that $c(x)=p_1(x)\star(b(x)\mid 0) + p_2(x)\star(\ell(x)\mid  a(x))$.
We have to check that $c(x)\in\langle S_1\cup S_2\rangle_{\ZZ_2}.$

If $\deg(p_1(x))< r-\deg(b(x))-1,$ then $p_1(x)\star(b(x)\mid 0)\in \langle S_1\rangle_{\ZZ_2}$. Otherwise, using the division algorithm, we compute $p_1(x)=q_1(x)\frac{x^r-1}{b(x)}+r_1(x)$ with $\deg(r_1(x))< r-\deg(b(x))-1,$ so
$$p_1(x)\star(b(x)\mid 0)=\left( q_1(x)\frac{x^r-1}{b(x)}+r_1(x)\right)\star(b(x)\mid 0)=r_1(x)\star(b(x)\mid 0)\in\langle S_1\rangle_{\ZZ_2}.$$

So, $c(x)\in \langle S_1\cup S_2\rangle_{\ZZ_2}$ if $p_2(x)\star(\ell(x)\mid  a(x))\in \langle S_1\cup S_2\rangle_{\ZZ_2}.$

If $\deg(p_2(x))< s-\deg(a(x))-1,$ then $p_2(x)\star(\ell(x)\mid a(x))\in \langle S_2\rangle_{\ZZ_2}$. If not, using the division algorithm, consider 
$p_2(x)=q_2(x)\frac{x^s-1}{a(x)}+r_2(x)$ where $\deg(r_2(x))< s-\deg(a(x))-1.$ Then,
\begin{align*}
p_2(x)\star(\ell(x)\mid  a(x))&=\left( q_2(x)\frac{x^s-1}{a(x)}+r_2(x)\right)\star(\ell(x)\mid  a(x))\\
&=\left( q_2(x)\frac{x^s-1}{a(x)}\right)\star(\ell(x)\mid  a(x)) + r_2(x)\star(\ell(x)\mid  a(x)).
\end{align*}
On the one hand, $r_2(x)\star(\ell(x)\mid  a(x))\in\langle S_2\rangle_{\ZZ_2}.$ On the other hand, 
$$\left( q_2(x)\frac{x^s-1}{a(x)}\right)\star(\ell(x)\mid  a(x))=(q_2(x)\frac{x^s-1}{a(x)}\ell(x)\mid  0).$$
By Proposition \ref{BdiviLXS/A}, $b(x)$ divides $\frac{x^s-1}{a(x)}\ell(x)$ and it follows straightforward that $(q_2(x)\frac{x^s-1}{a(x)}\ell(x)\mid  0)\in \langle S_1\rangle_{\ZZ_2}$. Thus, $c(x)\in \langle S_1\cup S_2\rangle_{\ZZ_2}.$
\end{proof}

\begin{prop}
Let $C=\langle(b(x)\mid { 0}),(\ell(x)\mid  a(x))\rangle$ be a $\mathbb{Z}_2$-double cyclic code. Then, $C$ is permutation equivalent to a binary linear code with generator matrix of the form
$$G =
 \left(\begin{array}{ccc|ccc}
  I_{r-\deg(b(x))} & A_1 & A_2 & 0 & 0 & 0 \\
  0 & B_{\kappa} & B & C_1 & I_\kappa &  0 \\
  0 & 0 & 0 & C_2  & R & I_{s-\deg(a(x))-\kappa} 
 \end{array}\right),
$$
where $B_\kappa$ is a square matrix of full rank and $\kappa=\deg(b(x))-\deg(\gcd(b(x),\ell(x)))$.
\end{prop}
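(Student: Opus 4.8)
The plan is to start from the spanning set $S_1\cup S_2$ of Proposition \ref{spaningset}, regard its elements as the rows of an initial generator matrix $M$ for $C$, and reach the claimed form by elementary row operations over $\ZZ_2$ together with permutations of columns performed separately inside the first $r$ and the last $s$ coordinates. Since $|S_1|=r-\deg(b(x))$ and $|S_2|=s-\deg(a(x))$, the matrix $M$ has $(r-\deg(b(x)))+(s-\deg(a(x)))$ rows, which is exactly the number of rows of $G$; and because the rows of $S_1\cup S_2$ are $\ZZ_2$-linearly independent, every matrix obtained from $M$ by these operations, $G$ included, has full row rank.

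First I would treat the rows coming from $S_1$. Each such row is $x^i\star(b(x)\mid 0)$ with $0\le i\le r-\deg(b(x))-1$, so it has zero second part and first part $b(x)x^i$ (no reduction modulo $x^r-1$ occurs, as the degree stays below $r$). These first parts form a banded matrix whose pivots are the constant term $b_0=1$, hence it has full rank $r-\deg(b(x))$ with pivots in the first $r-\deg(b(x))$ columns. Putting this block into reduced row echelon form gives $(I_{r-\deg(b(x))}\mid A_1\mid A_2)$ in the first $r$ coordinates and zeros in the last $s$, i.e. the top block $(I_{r-\deg(b(x))}\mid A_1\mid A_2\mid 0\mid 0\mid 0)$ of $G$. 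I would then add these rows to the rows coming from $S_2$ so as to clear their first $r-\deg(b(x))$ columns; as the top block has zero second part, this leaves the second parts of the $S_2$-rows untouched, and their first parts are now supported on the remaining $\deg(b(x))$ columns (column-blocks two and three).

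The decisive step is to split the $S_2$-rows according to their first parts. Since $\pi_r(C)=C_r=\langle\gcd(b(x),\ell(x))\rangle$ has dimension $r-\deg(\gcd(b(x),\ell(x)))$ while the top block already accounts for dimension $r-\deg(b(x))$ under $\pi_r$, the first parts of the modified $S_2$-rows span a space of dimension exactly $\kappa=\deg(b(x))-\deg(\gcd(b(x),\ell(x)))$; equivalently, by rank-nullity $\dim\ker(\pi_r|_C)=\dim C-\dim\pi_r(C)=s-\deg(a(x))-\kappa$. Hence row operations among the $S_2$-rows let me arrange that $\kappa$ of them have $\ZZ_2$-independent nonzero first parts (the middle block) while the remaining $s-\deg(a(x))-\kappa$ have zero first part (the bottom block, with left half $0$). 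The main obstacle, which I flag here, is that the subsequent normalization of the second parts must not reintroduce nonzero entries into the two zero-blocks already obtained (the zero second part of the top block and the zero first part of the bottom block).

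This obstacle is resolved by ordering the operations on the last $s$ coordinates, where the second parts of all $S_2$-rows together are row-equivalent to the banded full-rank matrix of shifts $a(x)x^j$, of rank $s-\deg(a(x))$. I would first reduce the $s-\deg(a(x))-\kappa$ bottom rows among themselves and permute the last $s$ columns to place $I_{s-\deg(a(x))-\kappa}$ in column-block six; as these rows have zero first part, the first part stays zero. Next I use the bottom rows to clear the corresponding entries of the $\kappa$ middle rows and then reduce the middle rows among themselves to expose $I_\kappa$ in column-block five. Since no middle row is ever added to a bottom row, the bottom block keeps its zero first part, while the middle block's first part remains of rank $\kappa$, so a permutation of the $\deg(b(x))$ first-part columns exposes a full-rank square block $B_\kappa$ (block two), the rest being $B$ (block three). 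The leftover $\deg(a(x))$ columns of the second part form column-block four with entries $C_1,C_2$, the top block retains its zero second part, and $A_1,A_2,B,R$ are whatever occupies the non-pivot positions, yielding $G$.
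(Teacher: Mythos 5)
Your proposal is correct and follows essentially the same route as the paper: both start from the spanning set $S_1\cup S_2$ of Proposition \ref{spaningset}, use the fact that $C_r=\langle\gcd(b(x),\ell(x))\rangle$ to identify the rank $\kappa$ of the block of first parts coming from $S_2$, and then finish by row operations and column permutations within each of the two coordinate blocks. The only difference is one of detail: where the paper says ``applying the convenient permutations and linear combinations,'' you spell out the ordering of operations (bottom rows before middle rows, never adding middle rows to bottom rows) that keeps the zero blocks intact, which is a legitimate filling-in of the paper's terse final step rather than a new argument.
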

\begin{proof}

Let $C$ be a $\mathbb{Z}_2$-double cyclic code with $C=\langle(b(x)\mid { 0}),(\ell(x)\mid  a(x))\rangle$. Then by Proposition \ref{spaningset}, $C$ is generated by the matrix whose rows are the elements of the set $S_1\cup S_2.$

Since $r-\deg(b(x))$ and $s-\deg(a(x))$ are the dimensions of the matrices generated by the shifts of $b(x)$ and $a(x)$, respectively, the code $C$ is permutation equivalent to a code with generator matrix of the form
$$
\left(\begin{array}{cc|cc}
  I_{r-\deg(b(x))} & A' & 0 & 0 \\
  0 & B' & C'  & I_{s-\deg(a(x))} 
 \end{array}\right).
$$

It is known that $C_r$ is a linear cyclic code generated by $\gcd(b(x), \ell(x))$, then the submatrix $B'$ has rank $\kappa=\deg(b(x))-\deg(\gcd(b(x),\ell(x))).$ Moreover, $C_r$ is permutation equivalent to a linear code generated by the matrix 
$$
\left(\begin{array}{ccc}
  I_{r-\deg(b(x))} & A_1 & A_2  \\
  0 & B_\kappa & B\\ 
  0 & 0 & 0
 \end{array}\right),
$$
with $B_\kappa$ a full rank square matrix of size $\kappa\times\kappa.$ Finally, applying the convenient permutations and linear combinations, we have that $C$ is permutation equivalent to a linear code with generator matrix 
$$
 \left(\begin{array}{ccc|ccc}
  I_{r-\deg(b(x))} & A_1 & A_2 & 0 & 0 & 0 \\
  0 & B_{\kappa} & B & C_1 & I_\kappa &  0 \\
  0 & 0 & 0 & C_2  & R & I_{s-\deg(a(x))-\kappa} 
 \end{array}\right).
$$
\end{proof}
\begin{coro}
Let $C=\langle(b(x)\mid { 0}),(\ell(x)\mid  a(x))\rangle$ be a $\mathbb{Z}_2$-double cyclic code. Then, $C$ is a binary linear code of dimension $r+s-\deg (b(x))-\deg(a(x)).$
\end{coro}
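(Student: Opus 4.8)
The plan is to identify the dimension of $C$ with the cardinality of a $\mathbb{Z}_2$-basis of $C$, and to produce such a basis directly from the spanning set already constructed. Since $C$ is a binary linear code, its dimension coincides with its dimension as a $\mathbb{Z}_2$-vector space, so it suffices to exhibit a basis and count its elements.

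First I would invoke Proposition~\ref{spaningset}, which asserts that $S_1\cup S_2$ generates $C$ as a $\mathbb{Z}_2$-module. The proof of that proposition already records that the elements of $S_1\cup S_2$ are $\mathbb{Z}_2$-linearly independent. Combining these two facts, $S_1\cup S_2$ is a $\mathbb{Z}_2$-basis of $C$, and therefore $\dim C=|S_1\cup S_2|$.

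Next I would count. By definition, $S_1$ consists of the $r-\deg(b(x))$ shifts $x^i\star(b(x)\mid 0)$ for $0\le i\le r-\deg(b(x))-1$, and $S_2$ consists of the $s-\deg(a(x))$ shifts $x^i\star(\ell(x)\mid a(x))$ for $0\le i\le s-\deg(a(x))-1$. Linear independence forces all these codewords to be distinct and nonzero, so $|S_1\cup S_2|=|S_1|+|S_2|=(r-\deg(b(x)))+(s-\deg(a(x)))$, which is exactly $r+s-\deg(b(x))-\deg(a(x))$.

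The argument is essentially a bookkeeping step, so I do not expect a genuine obstacle; the only point requiring care is that $S_1$ and $S_2$ be disjoint with no hidden collapse in the cardinality, and this is guaranteed by the $\mathbb{Z}_2$-linear independence established in Proposition~\ref{spaningset}. As a consistency check, the same count can be read off the generator matrix $G$ of the preceding Proposition: the three identity blocks $I_{r-\deg(b(x))}$, $I_\kappa$, and $I_{s-\deg(a(x))-\kappa}$ contribute independent rows summing to $r+s-\deg(b(x))-\deg(a(x))$, with the parameter $\kappa$ cancelling.
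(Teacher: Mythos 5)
Your proposal is correct and follows essentially the same route as the paper: the corollary is stated there without proof precisely because it is the immediate count of the $\mathbb{Z}_2$-independent generating set $S_1\cup S_2$ of Proposition~\ref{spaningset} (equivalently, of the rows of the generator matrix $G$), which is exactly the bookkeeping you carry out. The only point worth noting is that the distinctness and nonzeroness of the listed shifts is most cleanly seen directly (the degrees $\deg(x^ib(x))$, resp.\ $\deg(x^ia(x))$, are distinct and smaller than $r$, resp.\ $s$) rather than deduced from the paper's slightly nonstandard notion of $\mathbb{Z}_2$-linear independence, which tolerates zero elements.
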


\begin{prop}
Let $C=\langle(b(x)\mid { 0}),(\ell(x)\mid  a(x))\rangle$ be a $\mathbb{Z}_2$-double cyclic code. Then, we can assume that $\deg (\ell(x))<\deg(b(x))$.
\end{prop}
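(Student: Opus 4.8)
The plan is to exploit the non-uniqueness of the generators. Replacing the second generator $(\ell(x)\mid a(x))$ by any element of the form $(\ell(x)\mid a(x)) + q(x)\star(b(x)\mid 0)$, for $q(x)\in\ZZ_2[x]$, yields a new pair that generates the same code $C$, since each of the two second-generators can be recovered from the other together with $(b(x)\mid 0)$. So it suffices to find $q(x)$ for which the first coordinate $\ell(x)+q(x)b(x)$ has a representative of degree strictly less than $\deg(b(x))$.

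First I would take the representative of $\ell(x)$ in $\ZZ_2[x]$ of degree less than $r$ and apply the division algorithm in $\ZZ_2[x]$ to write $\ell(x)=q(x)b(x)+\rho(x)$ with $\deg(\rho(x))<\deg(b(x))$. The point to check carefully is that the product $q(x)b(x)$ does not wrap around modulo $x^r-1$: since $\deg(\rho(x))<\deg(b(x))$ and $\deg(\ell(x))<r$, we have $\deg\big(q(x)b(x)\big)=\deg(\ell(x)-\rho(x))<r$, so that $q(x)\star(b(x)\mid 0)=(q(x)b(x)\mid 0)$ with no reduction modulo $x^r-1$. Consequently
\[
(\ell(x)\mid a(x)) - q(x)\star(b(x)\mid 0) = (\ell(x)-q(x)b(x)\mid a(x)) = (\rho(x)\mid a(x)),
\]
and the first coordinate has degree less than $\deg(b(x))$, as required.

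Finally I would confirm that the new pair generates the same submodule, namely $C=\langle(b(x)\mid 0),(\rho(x)\mid a(x))\rangle$: the displayed identity shows $(\rho(x)\mid a(x))\in C$, and conversely $(\ell(x)\mid a(x))=(\rho(x)\mid a(x))+q(x)\star(b(x)\mid 0)$ lies in $\langle(b(x)\mid 0),(\rho(x)\mid a(x))\rangle$. Renaming $\rho(x)$ as $\ell(x)$, we may indeed assume $\deg(\ell(x))<\deg(b(x))$.

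The only genuinely delicate point is the degree bookkeeping in the middle step; everything else is the standard ``clear the leading terms using the first generator'' manoeuvre. I expect the main obstacle, if any, to be ensuring that the $\star$-operation acts here as honest polynomial multiplication (that is, that no reduction modulo $x^r-1$ intervenes), which is exactly why reducing $\ell(x)$ to a representative of degree less than $r$ before dividing is essential.
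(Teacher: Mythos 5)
Your proof is correct and takes essentially the same approach as the paper: replace the second generator by subtracting a suitable $\ZZ_2[x]$-multiple of $(b(x)\mid 0)$ and confirm the two-way inclusion of the resulting codes. The only difference is presentational --- you perform the full division $\ell(x)=q(x)b(x)+\rho(x)$ in one step (with the careful check that no reduction modulo $x^r-1$ intervenes), whereas the paper cancels a single leading term via $\ell(x)\mapsto \ell(x)+x^{\deg(\ell(x))-\deg(b(x))}b(x)$ and implicitly iterates; yours is, if anything, the more complete write-up.
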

\begin{proof}
Suppose that $\deg (\ell(x))\geq \deg(b(x))$. Let $i=\deg (\ell(x))-\deg(b(x))$ and let $C'$ be the code generated by
$$C'= \langle(b(x)\mid { 0}),(\ell(x)+x^i\star b(x)\mid  a(x))\rangle.$$
On the one hand, $\deg(\ell(x)+x^i\star b(x))<\deg(\ell(x))$ and since the generators of $C'$ belongs to $C$, we have that $C' \subseteq C$. On the other hand,
$$(\ell(x)\mid a(x))= (\ell(x)+x^i\star b(x)\mid  a(x))+x^i\star (b(x)\mid { 0}).$$
Then, $\langle(\ell(x)\mid  a(x))\rangle\subseteq C'$ and hence $C\subseteq C'$.
Thus, $C = C'.$ 
\end{proof}

\begin{prop}\label{BdiviLXS/A}
Let $C=\langle(b(x)\mid { 0}),(\ell(x)\mid  a(x))\rangle$ be a $\mathbb{Z}_2$-double cyclic code. Then, $b(x)\mid\frac{x^s-1}{a(x)}\ell (x).$
\end{prop}
\begin{proof}
Let $\pi$ be the projective homomorphism of $\ZZ_2[x]$-modules defined by:
$$\begin{array}{cccc}
\pi : & C &\longrightarrow & \ZZ_2[x]/(x^s-1) \\
 & (p_1(x)\mid  p_2(x))&\longrightarrow & p_2(x)
\end{array}$$
It can be easily checked that $\ker (\pi) = \langle(b(x)\mid { 0})\rangle.$\\ 
Now, consider $\frac{x^s-1}{a(x)}\star (\ell(x)\mid  a(x)) = (\frac{x^s-1}{a(x)}\ell(x)\mid { 0}).$ 
So, 
$$\frac{x^s-1}{a(x)}\star (\ell(x)\mid  a(x))\in \ker (\pi)=\langle(b(x)\mid { 0})\rangle.$$ 
Thus, $b(x)\mid\frac{x^s-1}{a(x)}\ell(x).$
\end{proof}

\begin{coro}\label{bdiviXSgcd}
Let $C=\langle(b(x)\mid { 0}),(\ell(x)\mid  a(x))\rangle$ be a $\mathbb{Z}_2$-double cyclic code. Then, $b(x)\mid\frac{x^s-1}{a(x)}\gcd(\ell (x), b(x)).$
\end{coro}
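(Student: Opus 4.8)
The plan is to reduce this statement to Proposition \ref{BdiviLXS/A} by means of a B\'ezout identity for the greatest common divisor, exploiting that $\ZZ_2[x]$ is a principal ideal domain. Write $d(x)=\gcd(\ell(x),b(x))$. Since $\ZZ_2[x]$ is a Euclidean domain, there exist $\alpha(x),\beta(x)\in\ZZ_2[x]$ with
$$d(x)=\alpha(x)\ell(x)+\beta(x)b(x).$$
This identity is the engine of the whole argument: it lets me express the ``mixed'' quantity $\gcd(\ell(x),b(x))$ in terms of $\ell(x)$, for which divisibility by $b(x)$ is already controlled, and $b(x)$ itself, for which it is trivial.

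Next I would multiply the B\'ezout identity through by $\frac{x^s-1}{a(x)}$ to obtain
$$\frac{x^s-1}{a(x)}d(x)=\alpha(x)\,\frac{x^s-1}{a(x)}\ell(x)+\beta(x)\,\frac{x^s-1}{a(x)}b(x),$$
and then show that $b(x)$ divides each of the two summands on the right. For the first summand, Proposition \ref{BdiviLXS/A} gives $b(x)\mid\frac{x^s-1}{a(x)}\ell(x)$, and hence $b(x)\mid\alpha(x)\frac{x^s-1}{a(x)}\ell(x)$. For the second summand, $b(x)$ divides $\beta(x)\frac{x^s-1}{a(x)}b(x)$ trivially, as it is already a multiple of $b(x)$. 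Since $b(x)$ divides both summands it divides their sum, which is precisely $\frac{x^s-1}{a(x)}d(x)$, yielding the claim.

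The only point demanding care, and the step I would check most carefully, is that the divisibility asserted in Proposition \ref{BdiviLXS/A} is a genuine divisibility of polynomials in $\ZZ_2[x]$ rather than merely a congruence modulo $x^r-1$; otherwise the manipulation of the B\'ezout identity would not be legitimate. This is where the hypothesis $b(x)\mid(x^r-1)$ does the work: the kernel computation in the proof of Proposition \ref{BdiviLXS/A} shows $\frac{x^s-1}{a(x)}\ell(x)\equiv\mu(x)b(x)\pmod{x^r-1}$ for some $\mu(x)$, and absorbing the multiple of $x^r-1$ (which is itself a multiple of $b(x)$) upgrades this to honest divisibility $b(x)\mid\frac{x^s-1}{a(x)}\ell(x)$ in $\ZZ_2[x]$. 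With that understood, the B\'ezout argument goes through verbatim and no further obstacle remains.
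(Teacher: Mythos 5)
Your proof is correct, and it supplies exactly the argument the paper leaves implicit: the corollary is stated without proof as an immediate consequence of Proposition \ref{BdiviLXS/A}, the intended reasoning being precisely that $\gcd(\ell(x),b(x))$ is a $\ZZ_2[x]$-combination of $\ell(x)$ and $b(x)$, so $b(x)$ divides $\frac{x^s-1}{a(x)}\gcd(\ell(x),b(x))$ because it divides $\frac{x^s-1}{a(x)}\ell(x)$ (the proposition) and $\frac{x^s-1}{a(x)}b(x)$ (trivially). Your additional check that the proposition's conclusion is honest divisibility in $\ZZ_2[x]$ --- since $b(x)\mid(x^r-1)$ lets one absorb the multiple of $x^r-1$ arising from the kernel computation --- is exactly the right point to verify, and it holds.
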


\begin{prop}
Let $C=\langle(b(x)\mid { 0}),(\ell(x)\mid  a(x))\rangle$ be a separable $\mathbb{Z}_2$-double cyclic code. Then, $\ell (x)=0.$
\end{prop}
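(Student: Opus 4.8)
The plan is to exploit the characterization of $C_r$ and $C_s$ recorded just after Theorem \ref{Z2submodulesform}, together with the normalization $\deg(\ell(x))<\deg(b(x))$ that the preceding proposition lets us assume. Recall that $C$ being \emph{separable} means $C=C_r\times C_s$ as a set, where $C_r$ is the binary cyclic code generated by $\gcd(b(x),\ell(x))$ and $C_s$ is the binary cyclic code generated by $a(x)$. The first step is to produce from separability a codeword with zero first component that carries the full second component $a(x)$: since $C_r$ is linear we have $0\in C_r$, and trivially $a(x)\in C_s$, so separability forces $(0\mid a(x))\in C$.

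Next I would subtract this codeword from the generator $(\ell(x)\mid a(x))\in C$. Working over $\ZZ_2$, this yields
$$(\ell(x)\mid a(x))-(0\mid a(x))=(\ell(x)\mid 0)\in C.$$
Thus $(\ell(x)\mid 0)$ is an element of $C$ whose second coordinate vanishes. At this point I invoke the computation already carried out in the proof of Proposition \ref{BdiviLXS/A}: the projection $\pi\colon C\to\ZZ_2[x]/(x^s-1)$ onto the second component has kernel $\langle(b(x)\mid 0)\rangle$. Since $(\ell(x)\mid 0)\in\ker(\pi)$, it follows that $(\ell(x)\mid 0)\in\langle(b(x)\mid 0)\rangle$, which is exactly the statement that $b(x)$ divides $\ell(x)$ in $\ZZ_2[x]/(x^r-1)$.

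It remains to convert this divisibility into $\ell(x)=0$. Because $b(x)\mid(x^r-1)$, membership in the submodule $\langle(b(x)\mid 0)\rangle$ corresponds to genuine polynomial divisibility of the canonical degree-$<r$ representative by $b(x)$; this is the one point requiring a little care, since a priori divisibility is only asserted modulo $x^r-1$, but the divisor condition $b(x)\mid (x^r-1)$ makes the two notions coincide. By the previous proposition we may assume $\deg(\ell(x))<\deg(b(x))$, so a nonzero $\ell(x)$ divisible by $b(x)$ would have degree at least $\deg(b(x))$, a contradiction. Hence $\ell(x)=0$, as claimed.

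The main obstacle is the very first step, namely correctly extracting $(0\mid a(x))\in C$ from the definition of separability; once that codeword is in hand, the remaining manipulations are routine and lean entirely on results already established. An alternative route, which I would keep in reserve, is a counting argument: separability gives $|C|=|C_r|\,|C_s|$, and comparing this with the dimension formula $r+s-\deg(b(x))-\deg(a(x))$ and $\dim C_r=r-\deg(\gcd(b(x),\ell(x)))$ forces $\deg(b(x))=\deg(\gcd(b(x),\ell(x)))$, whence $b(x)=\gcd(b(x),\ell(x))$ and again $b(x)\mid\ell(x)$.
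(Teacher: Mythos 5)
Your proof is correct; note that the paper itself states this proposition \emph{without any proof}, so your argument supplies something the exposition actually omits rather than paralleling an existing proof. Every step leans only on material available at that point in the paper: the identification of $C_r$ and $C_s$ as the cyclic codes generated by $\gcd(b(x),\ell(x))$ and $a(x)$ (stated after Theorem \ref{Z2submodulesform}), the kernel computation $\ker(\pi)=\langle(b(x)\mid 0)\rangle$ from the proof of Proposition \ref{BdiviLXS/A}, and the normalization $\deg(\ell(x))<\deg(b(x))$ from the proposition preceding the statement. The chain separability $\Rightarrow (0\mid a(x))\in C \Rightarrow (\ell(x)\mid 0)\in C \Rightarrow (\ell(x)\mid 0)\in\langle(b(x)\mid 0)\rangle \Rightarrow b(x)\mid \ell(x)$ is sound, and you are right that the passage from divisibility modulo $x^r-1$ to genuine polynomial divisibility needs (and gets) the hypothesis $b(x)\mid(x^r-1)$. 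You are also right to make the normalization explicit: without it the statement is literally false, since generator pairs are not unique --- for instance $\langle(b(x)\mid 0),(b(x)\mid a(x))\rangle=\langle(b(x)\mid 0),(0\mid a(x))\rangle$ is separable yet written with $\ell(x)=b(x)\neq 0$ --- so the proposition must be read under the convention of the preceding proposition, exactly as you do. Your reserve counting argument is equally valid and perhaps more in the spirit of the surrounding results: separability is equivalent to $\dim C=\dim C_r+\dim C_s$, which with $\dim C=r+s-\deg(b(x))-\deg(a(x))$ and $\dim C_r=r-\deg(\gcd(b(x),\ell(x)))$ forces $\gcd(b(x),\ell(x))=b(x)$, i.e.\ $b(x)\mid\ell(x)$, and one concludes as before.
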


\section{Duality}\label{sec:Duality}

Let $C$ be a $\mathbb{Z}_2$-double cyclic code and $C^\perp$ the dual code of $C$ (see \cite{huffman}).
Taking a vector ${\bf v}$ of $C^\perp$, ${\bf u}\cdot {\bf v}=0$ for all ${\bf u}$ in $C$. Since ${\bf u}$ belongs to $C$, we know that ${\bf u}^{(-1)}$ is also a codeword. So, ${\bf u}^{(-1)}\cdot {\bf v}={\bf u}\cdot{\bf v}^{(1)}=0$ for all ${\bf u}$ from $C$, therefore ${\bf v}^{(1)}$ is in $C^\perp$ and $C^\perp$ is also a $\mathbb{Z}_2$-double cyclic code. Consequently, we obtain the following proposition.

\begin{prop}\label{DualDoubleCyclic}
Let $C$ be a $\mathbb{Z}_2$-double cyclic code. Then the dual code of $C$ is also a $\mathbb{Z}_2$-double cyclic code. We denote 
$$C^\perp = \langle (\bar{b}(x)\mid { 0}), (\bar{\ell}(x) \mid  \bar{a}(x)) \rangle,$$
where $\bar{b}(x), \bar{\ell}(x)\in\mathbb{Z}_2[x]/(x^r-1)$ with $\bar{b}(x)\mid(x^r-1)$ and $\bar{a}(x)\in \mathbb{Z}_2[x]/(x^s-1)$ with $\bar{a}(x)\mid(x^s-1)$.  
\end{prop}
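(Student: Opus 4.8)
The plan is to establish Proposition~\ref{DualDoubleCyclic} in two stages. The statement actually contains two assertions: first, that $C^\perp$ is again a $\mathbb{Z}_2$-double cyclic code, and second, that it therefore admits a generating set of the canonical shape guaranteed by Theorem~\ref{Z2submodulesform}. The second assertion is immediate once the first is in hand, so the entire content is the first one, and the discussion preceding the proposition already sketches the key computation.

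First I would make precise the identity $\mathbf{u}^{(-1)}\cdot\mathbf{v}=\mathbf{u}\cdot\mathbf{v}^{(1)}$ used in the paragraph above. The point is that a simultaneous cyclic shift of both blocks is, with respect to the standard inner product on $\mathbb{Z}_2^r\times\mathbb{Z}_2^s$, an \emph{orthogonal} transformation: reindexing the coordinates modulo $r$ in the first block and modulo $s$ in the second block is a permutation of the coordinate set, hence preserves the inner product. Concretely, writing out $\sum_{j}u_{j-1}v_j+\sum_{j}u'_{j-1}v'_j$ and relabeling the summation index $j\mapsto j+1$ (modulo $r$, respectively modulo $s$) turns it into $\sum_j u_j v_{j+1}+\sum_j u'_j v'_{j+1}=\mathbf{u}\cdot\mathbf{v}^{(1)}$. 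I would verify this cleanly so that the shift operator and its inverse behave as mutual adjoints for the dot product.

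With that identity settled, the core argument is exactly the one in the preamble: fix $\mathbf{v}\in C^\perp$ and any $\mathbf{u}\in C$. Since $C$ is $\mathbb{Z}_2$-double cyclic, $\mathbf{u}^{(-1)}\in C$, so $\mathbf{u}^{(-1)}\cdot\mathbf{v}=0$; by the adjunction this reads $\mathbf{u}\cdot\mathbf{v}^{(1)}=0$. As $\mathbf{u}$ ranges over all of $C$, this shows $\mathbf{v}^{(1)}\in C^\perp$. Hence $C^\perp$ is invariant under the simultaneous shift, i.e.\ it is a $\mathbb{Z}_2$-double cyclic code; one should also note $C^\perp$ is a binary linear code, being the dual of a linear code. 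Finally, since $C^\perp$ is itself a submodule of $R_{r,s}$ under the identification already established, Theorem~\ref{Z2submodulesform} applies verbatim and yields generators $(\bar b(x)\mid 0)$ and $(\bar\ell(x)\mid\bar a(x))$ with $\bar b(x)\mid(x^r-1)$ and $\bar a(x)\mid(x^s-1)$, which is exactly the claimed form.

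I do not anticipate a genuine obstacle here; the result is essentially a bookkeeping lemma. The only place demanding a little care is the adjunction identity, where one must track that the inverse shift $\mathbf{u}^{(-1)}$ (not $\mathbf{u}^{(1)}$) is what pairs with $\mathbf{v}^{(1)}$, and that the index arithmetic is carried out separately in each block with its own modulus. The remaining work is to invoke Theorem~\ref{Z2submodulesform} to produce the stated generator polynomials; no new divisibility facts about $\bar b(x)$, $\bar\ell(x)$, $\bar a(x)$ are claimed at this stage, so the proof can stop as soon as double-cyclicity is established.
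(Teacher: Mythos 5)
Your proposal is correct and takes essentially the same route as the paper: the adjunction identity $\mathbf{u}^{(-1)}\cdot\mathbf{v}=\mathbf{u}\cdot\mathbf{v}^{(1)}$ (the simultaneous shift being a coordinate permutation), combined with closure of $C$ under shifts, gives $\mathbf{v}^{(1)}\in C^\perp$, and Theorem~\ref{Z2submodulesform} then supplies the stated generator form. The paper's own proof is exactly this computation, sketched in the paragraph preceding the proposition, with the same implicit appeal to closure of $C$ under the inverse shift (which, as in your write-up, follows by iterating the forward shift, e.g.\ $\lcm(r,s)-1$ times).
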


The \textit{reciprocal polynomial} of a polynomial $p(x)$ is $x^{\deg(p(x))}p(x^{-1})$ and is denoted by $p^*(x)$. As in the theory of binary cyclic codes, reciprocal polynomials have an important role in the duality (see \cite{macwilliams}).

We denote the polynomial $\sum^{m-1}_{i=0} x^i$ by $\theta_m(x)$. Using this notation we have the following proposition.

\begin{prop}\label{xnm=xntheta}
Let $n,m\in \NN$. Then, $x^{nm}-1= (x^n-1)\theta_m(x^n).$
\end{prop}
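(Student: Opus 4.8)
The plan is to reduce the claim to the standard factorization of a finite geometric sum. First I would introduce the abbreviation $y=x^n$ and observe that, by the definition of $\theta_m$, one has
$\theta_m(x^n)=\sum_{i=0}^{m-1}(x^n)^i=\sum_{i=0}^{m-1}y^i$. Under this substitution the right-hand side of the asserted identity becomes $(y-1)\sum_{i=0}^{m-1}y^i$, so the entire proposition amounts to the one-variable identity $y^m-1=(y-1)\sum_{i=0}^{m-1}y^i$ combined with the observation that $y^m=(x^n)^m=x^{nm}$.

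Next I would verify this one-variable identity by expanding the product and telescoping. Distributing gives $\sum_{i=0}^{m-1}y^{i+1}+\sum_{i=0}^{m-1}y^i$; reindexing the first sum as $\sum_{j=1}^{m}y^{j}$ aligns it with the second so that every intermediate power $y^1,\dots,y^{m-1}$ occurs exactly twice, leaving only $y^m$ from the first sum and $y^0=1$ from the second. Since we are working in $\ZZ_2[x]$, no signs intervene and $-1=1$; each doubly occurring term vanishes modulo $2$, and we are left with $y^m+1=y^m-1$, as desired.

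Substituting $y=x^n$ back then yields $x^{nm}-1=(x^n-1)\theta_m(x^n)$, which is the claim. The argument is entirely routine and I do not anticipate any genuine obstacle; the only points deserving a moment's care are matching the index range $i=0,\dots,m-1$ in the definition of $\theta_m$ against the telescoping, and the degenerate cases $m=0$ (empty sum, both sides $0$) and $m=1$ (where $\theta_1(x^n)=1$), each of which is consistent with the same formula.
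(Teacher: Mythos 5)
Your proof is correct and takes essentially the same route as the paper: both reduce the claim to the geometric-sum identity $y^m-1=(y-1)\theta_m(y)$ and substitute $y=x^n$. The only difference is that you verify that identity by expansion and telescoping (correctly, including the characteristic-$2$ sign remark), whereas the paper simply cites it as well known.
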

\begin{proof}
It is well know that $y^m-1=(y-1)\theta_m(y)$, replacing $y$ by $x^n$ the result follows.
\end{proof}
From now on, $\emm$ denotes the least common multiple of $r$ and $s$.
\begin{defi}
Let ${\bf u}(x)=(u(x)\mid u'(x))$ and ${\bf v}(x)=(v(x)\mid v'(x))$ be elements in $R_{r,s}$. We define the map
$$\circ:R_{r,s}\times R_{r,s}\longrightarrow \mathbb{Z}_2[x]/(x^{\emm}-1),$$
such that
\begin{align*}
\circ({\bf u}(x),{\bf v}(x))=& u(x)\theta_{\frac{\emm}{r}}(x^r)x^{\emm-1-\deg(v(x))}v^*(x) +\\
&+  u'(x)\theta_\frac{\emm}{s}(x^s)x^{\emm-1-\deg(v'(x))}{v'}^*(x) \mod(x^{\emm}-1).
\end{align*}
 
\end{defi}
The map $\circ$ is linear in each of its arguments; i.e., if we fix the first entry of the map invariant, while letting the second entry vary, then the result is a linear map. Similarly, when fixing the second entry invariant. Then, the map $\circ$ is a bilinear map between $\mathbb{Z}_2[x]$-modules.

From now on, we denote $\circ({\bf u}(x), {\bf v}(x))$ by ${\bf u}(x)\circ{\bf v}(x)$. Note that ${\bf u}(x)\circ{\bf v}(x)$ belongs to $\mathbb{Z}_2[x]/(x^{\emm}-1)$.

\begin{prop}
Let ${\bf u}$ and ${\bf v}$ be vectors in $\mathbb{Z}_2^{r}\times\mathbb{Z}_2^{s}$ with associated polynomials ${\bf u}(x)=(u(x)\mid u'(x))$ and ${\bf v}(x)=(v(x)\mid v'(x))$, respectively. Then, ${\bf u}$ is orthogonal to ${\bf v}$ and all its shifts if and only if $${\bf u}(x)\circ{\bf v}(x)= 0\mod(x^{\emm}-1).$$ 
\end{prop}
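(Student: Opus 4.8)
The plan is to show that the polynomial $\mathbf{u}(x)\circ\mathbf{v}(x)$ encodes, coefficient by coefficient, exactly the inner products of $\mathbf{u}$ with the shifts of $\mathbf{v}$. First I would simplify the definition of $\circ$ using the reciprocal identity $p^*(x)=x^{\deg(p(x))}p(x^{-1})$, which cancels the degree terms: $x^{\emm-1-\deg(v(x))}v^*(x)=x^{\emm-1}v(x^{-1})$, and similarly for $v'$. Working in $\ZZ_2[x]/(x^{\emm}-1)$, where $x^{-1}=x^{\emm-1}$, this rewrites the map as
$$\mathbf{u}(x)\circ\mathbf{v}(x)=u(x)\theta_{\emm/r}(x^r)\,x^{\emm-1}v(x^{-1})+u'(x)\theta_{\emm/s}(x^s)\,x^{\emm-1}v'(x^{-1})\pmod{x^{\emm}-1},$$
so that the degrees of $v$ and $v'$ no longer appear explicitly.

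The second step is to read off the factor $\theta_{\emm/r}(x^r)=1+x^r+\dots+x^{\emm-r}$ as a periodic repetition operator: since $r\mid\emm$, the product $u(x)\theta_{\emm/r}(x^r)$ is the polynomial $\sum_{t=0}^{\emm-1}u_{t\bmod r}\,x^t$ of degree less than $\emm$. I would then extract the coefficient of $x^e$ in the first summand for each $0\le e<\emm$. Multiplying this periodic repetition by $x^{\emm-1}v(x^{-1})=\sum_{j=0}^{r-1}v_j x^{\emm-1-j}$ and reducing modulo $x^{\emm}-1$, a short index computation (one $t$ solves $t\equiv e+1+j\pmod{\emm}$, and $(t\bmod\emm)\bmod r=(e+1+j)\bmod r$ because $r\mid\emm$) shows this coefficient equals $\sum_{j=0}^{r-1}u_{(e+1+j)\bmod r}\,v_j$. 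The identical calculation on the second summand yields $\sum_{j=0}^{s-1}u'_{(e+1+j)\bmod s}\,v'_j$.

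The third step is the reindexing that turns these sums into shifted inner products. Setting $i=(-e-1)\bmod\emm$ and substituting $k=(e+1+j)\bmod r$ in the first sum (a bijection of $\{0,\dots,r-1\}$), I would rewrite it as $\sum_{k=0}^{r-1}u_k\,v_{(k+i)\bmod r}$, which is exactly the first-block contribution to $\mathbf{u}\cdot\mathbf{v}^{(i)}$; the second sum becomes $\sum_{k=0}^{s-1}u'_k\,v'_{(k+i)\bmod s}$ in the same way. The point that must be handled with care — and where the choice $\emm=\lcm(r,s)$ together with the two theta-factors earns its keep — is that a single exponent $e$ has to produce the \emph{same} shift $i$ in both blocks. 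This works precisely because $i\equiv-e-1\pmod{\emm}$ forces $i\equiv-e-1$ modulo both $r$ and $s$ (as $r,s\mid\emm$), so the two contributions assemble into the full inner product $\mathbf{u}\cdot\mathbf{v}^{(i)}$ rather than into two unrelated shifts. This alignment is the main obstacle; the remaining steps are bookkeeping.

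Finally, as $e$ runs through $0,1,\dots,\emm-1$, the map $e\mapsto i=(-e-1)\bmod\emm$ is a bijection of $\{0,\dots,\emm-1\}$, and every distinct shift of $\mathbf{v}$ occurs exactly once in this range, since the shifts of $\mathbf{v}$ have common period $\emm$. Hence $\mathbf{u}(x)\circ\mathbf{v}(x)=0$ in $\ZZ_2[x]/(x^{\emm}-1)$ if and only if every one of its coefficients vanishes, if and only if $\mathbf{u}\cdot\mathbf{v}^{(i)}=0$ for all $i$, which is precisely the statement that $\mathbf{u}$ is orthogonal to $\mathbf{v}$ and all of its shifts. This proves both implications at once.
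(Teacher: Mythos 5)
Your proof is correct and follows essentially the same route as the paper: both arguments show that the coefficient of $x^{\emm-1-i}$ in ${\bf u}(x)\circ{\bf v}(x)$ is exactly the inner product ${\bf u}\cdot{\bf v}^{(i)}$, using the $\theta$-factors to replicate each block periodically and the fact that $r,s\mid\emm$ to keep the two blocks aligned on the same shift, so that vanishing of all coefficients is equivalent to orthogonality to all shifts. Your write-up merely makes explicit the index bookkeeping (the rewriting $x^{\emm-1-\deg(v)}v^*(x)=x^{\emm-1}v(x^{-1})$ and the coefficient extraction) that the paper compresses into ``one can check'' and ``arranging the terms.''
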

\begin{proof}
Let ${\bf v}^{(i)}=(v_{0+i}v_{1+i}\ldots v_{r-1+i}\mid v'_{0+i}\ldots v'_{s-1+i})$ be the $i$th shift of ${\bf v}$. Then, 

$${\bf u}\cdot{\bf v}^{(i)}=0\mbox{ if and only if }\sum^{r-1}_{j=0} u_jv_{j+i} +\sum^{s-1}_{k=0} u'_kv'_{k+i}=0.$$
Let $S_i=\sum^{r-1}_{j=0} u_jv_{j+i} +\sum^{s-1}_{k=0} u'_kv'_{k+i}$. One can check that  

\begin{align*}
{\bf u}(x)\circ{\bf v}(x)&=\sum^{r-1}_{n=0}\left[ \theta_\frac{\emm}{r}(x^r)\sum^{r-1}_{j=0}u_jv_{j+n}x^{\emm-1-n}\right] +\cdots\\ 
&\cdots + \sum^{s-1}_{t=0}\left[  \theta_\frac{\emm}{s}(x^s)\sum^{s-1}_{k=0}u'_kv'_{k+t}x^{\emm-1-t}\right]\\
&=\theta_\frac{\emm}{r}(x^r)\left[\sum^{r-1}_{n=0} \sum^{r-1}_{j=0}u_jv_{j+n}x^{\emm-1-n}\right] +\cdots\\ 
&\cdots + \theta_\frac{\emm}{s}(x^s)\left[\sum^{s-1}_{t=0}  \sum^{s-1}_{k=0}u'_kv'_{k+t}x^{\emm-1-t}\right].
\end{align*}

Then, arranging the terms one obtains that

$${\bf u}(x)\circ{\bf v}(x)=\sum^{\emm-1}_{i=0} S_i x^{\emm-1-i}\mod(x^{\emm}-1).$$
Thus, ${\bf u}(x)\circ{\bf v}(x)=0$ if and only if $S_i=0$ for $0\leq i\leq \emm-1.$
\end{proof}

\begin{lemm}\label{Lemma1}
Let ${\bf u}(x)=(u(x)\mid u'(x))$ and ${\bf v}(x)=(v(x)\mid v'(x))$ be elements in $R_{r,s}$ such that \mbox{${\bf u}(x)\circ{\bf v}(x)=0\mod(x^{\emm}-1)$}. If $u'(x)$ or $v'(x)$ equal $0$, then $u(x)v^*(x)=0\mod(x^r-1)$. Respectively,  if $u(x)$ or $v(x)$ equal $0$, then $u'(x)v'^*(x)=0\mod(x^s-1)$.
\end{lemm}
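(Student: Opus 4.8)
The plan is to reduce the bilinear relation to a divisibility statement in the polynomial ring $\ZZ_2[x]$ and then cancel a common factor. First I would dispose of the second summand of ${\bf u}(x)\circ{\bf v}(x)$ using the hypothesis. If $u'(x)=0$ this summand vanishes immediately, and if $v'(x)=0$ then ${v'}^*(x)=0$ annihilates it (so the exponent $\emm-1-\deg(v'(x))$, even though formally ill-defined for the zero polynomial, is irrelevant). Hence in either case the assumption ${\bf u}(x)\circ{\bf v}(x)=0\bmod(x^{\emm}-1)$ collapses to
$$u(x)\,\theta_{\frac{\emm}{r}}(x^r)\,x^{\emm-1-\deg(v(x))}\,v^*(x)=0\bmod(x^{\emm}-1).$$

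Next I would exploit that $x$ is a unit in $\ZZ_2[x]/(x^{\emm}-1)$, since $x\cdot x^{\emm-1}\equiv 1$; consequently the factor $x^{\emm-1-\deg(v(x))}$ is invertible and may be dropped, leaving
$$\theta_{\frac{\emm}{r}}(x^r)\,u(x)v^*(x)=0\bmod(x^{\emm}-1).$$
By Proposition \ref{xnm=xntheta}, taking $n=r$ and $m=\frac{\emm}{r}$, we have $x^{\emm}-1=(x^r-1)\,\theta_{\frac{\emm}{r}}(x^r)$. Reading the congruence above as an honest polynomial identity in $\ZZ_2[x]$, this says $\theta_{\frac{\emm}{r}}(x^r)\,u(x)v^*(x)=(x^r-1)\,\theta_{\frac{\emm}{r}}(x^r)\,Q(x)$ for some $Q(x)\in\ZZ_2[x]$.

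The decisive step is the cancellation: since $\ZZ_2[x]$ is an integral domain and $\theta_{\frac{\emm}{r}}(x^r)\neq 0$, I can cancel this common factor to obtain $u(x)v^*(x)=(x^r-1)Q(x)$, that is, $u(x)v^*(x)=0\bmod(x^r-1)$, as claimed. The symmetric statement follows by the identical argument with the two coordinates interchanged: assuming $u(x)=0$ or $v(x)=0$ kills the \emph{first} summand, and using $x^{\emm}-1=(x^s-1)\,\theta_{\frac{\emm}{s}}(x^s)$ together with the same cancellation yields $u'(x){v'}^*(x)=0\bmod(x^s-1)$. The one point demanding care is precisely this cancellation: one must lift the relation to a genuine identity in $\ZZ_2[x]$ before cancelling, because $\theta_{\frac{\emm}{r}}(x^r)$ is a zero divisor in the quotient ring $\ZZ_2[x]/(x^{\emm}-1)$ and cancellation is only legitimate in the integral domain $\ZZ_2[x]$.
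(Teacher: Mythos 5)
Your proof is correct and follows essentially the same route as the paper's: both collapse ${\bf u}(x)\circ{\bf v}(x)$ to the single summand, lift the congruence to a genuine polynomial identity in $\ZZ_2[x]$, invoke Proposition \ref{xnm=xntheta} to write $x^{\emm}-1=(x^r-1)\theta_{\frac{\emm}{r}}(x^r)$, and cancel the factor $\theta_{\frac{\emm}{r}}(x^r)$ in the integral domain. The only cosmetic difference is that you dispose of the power of $x$ at the outset via the invertibility of $x$ in $\ZZ_2[x]/(x^{\emm}-1)$, whereas the paper multiplies through by $x^{\deg(v(x))+1}$ and absorbs the resulting $x^{\emm}$ at the end modulo $x^r-1$.
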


\begin{proof}
Let $u'(x)$ or $v'(x)$ equal $0$, then 
$${\bf u}(x)\circ{\bf v}(x)=u(x)\theta_{\frac{\emm}{r}}(x^r)x^{\emm-1-\deg(v(x))}v^*(x)+0=0\mod(x^{\emm}-1).$$
So, 
$$u(x)\theta_{\frac{\emm}{r}}(x^r)x^{\emm-1-\deg(v(x))}v^*(x)=\mu'(x)(x^\emm-1),$$
for some $\mu'(x)\in\ZZ_2[x]$. Let $\mu(x)=\mu'(x)x^{\deg(v(x))+1}$, by Proposition \ref{xnm=xntheta},
$$u(x)x^{\emm}v^*(x)=\mu(x)(x^r-1),$$
$$u(x)v^*(x)=0\mod(x^r-1).$$
The same argument can be used to prove the other case.
\end{proof}

\begin{prop}\label{Dimension_subcodes}
Let $C=\langle(b(x)\mid { 0}),(\ell(x)\mid  a(x))\rangle$ be a $\mathbb{Z}_2$-double cyclic code. Then,
$$|C_r|=2^{r-\deg(b(x))+\kappa}, |C_s|=2^{s-\deg(a(x))},$$
$$|(C_r)^\perp|=2^{\deg(b(x))-\kappa}, |(C_s)^\perp|=2^{\deg(a(x))},$$
$$|(C^\perp)_r|=2^{\deg(b(x))}, |(C^\perp)_s|=2^{\deg(a(x))+\kappa},$$
where $\kappa=\deg(b(x))-\deg(\gcd(b(x),\ell(x)))$.
\end{prop}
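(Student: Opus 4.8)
The plan is to reduce all six equalities to facts about binary cyclic codes together with the classical duality between puncturing and shortening a linear code. Throughout I write $g(x)=\gcd(b(x),\ell(x))$, so that $\deg g = \deg b - \kappa$.

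First I would dispose of the four quantities attached to the projections $C_r$ and $C_s$. We already know that $C_r$ and $C_s$ are the binary cyclic codes, of lengths $r$ and $s$, generated by $g(x)$ and $a(x)$ respectively. Hence by the standard theory of cyclic codes $\dim C_r = r - \deg g = r - \deg b + \kappa$ and $\dim C_s = s - \deg a$, which give the stated cardinalities of $C_r$ and $C_s$; taking duals inside $\mathbb{Z}_2^r$ and $\mathbb{Z}_2^s$ then yields $\dim (C_r)^\perp = \deg b - \kappa$ and $\dim (C_s)^\perp = \deg a$, i.e. the third and fourth equalities.

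The two remaining cardinalities, those of $(C^\perp)_r$ and $(C^\perp)_s$, I would obtain from the shortened subcodes of $C$. Put $C' = \{(p(x)\mid q(x))\in C : q(x)=0\}$ and, symmetrically, $C'' = \{(p(x)\mid q(x))\in C : p(x)=0\}$; after discarding the zero block, $\pi_r(C')$ and $\pi_s(C'')$ are binary cyclic codes in $\mathbb{Z}_2[x]/(x^r-1)$ and $\mathbb{Z}_2[x]/(x^s-1)$ with $\dim \pi_r(C')=\dim C'$ and $\dim \pi_s(C'')=\dim C''$, since the relevant projection is injective on each. From the proof of Theorem \ref{Z2submodulesform} we have $C'=\langle (b(x)\mid 0)\rangle$, so $\pi_r(C')=\langle b(x)\rangle$ and $\dim C' = r - \deg b$. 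The key tool is the classical fact that puncturing and shortening are dual operations (see \cite{huffman}): projecting $C^\perp$ onto the first $r$ coordinates gives the dual, inside $\mathbb{Z}_2^r$, of the code obtained by shortening $C$ on those coordinates, and likewise for the last $s$ coordinates, so that $(C^\perp)_r = (\pi_r(C'))^\perp$ and $(C^\perp)_s = (\pi_s(C''))^\perp$. Consequently $\dim (C^\perp)_r = r - \dim C' = \deg b$, which is the fifth equality.

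It remains to compute $\dim C''$, and this is the one step that does not follow by merely reading off a generator, since the structure theorem supplies a generator for $C'$ but not for $C''$. I would resolve it by a dimension count using the canonical projection $\pi_r : C \to C_r$, whose kernel is exactly $\{(0\mid q(x))\in C\}\cong C''$, giving $\dim C = \dim C'' + \dim C_r$. Inserting $\dim C = r+s-\deg b - \deg a$ (already established) and $\dim C_r = r - \deg b + \kappa$ yields $\dim C'' = s - \deg a - \kappa$, whence $\dim (C^\perp)_s = s - \dim C'' = \deg a + \kappa$, the final equality. The main obstacle is precisely this control of $C''$ together with a careful application of the puncturing/shortening duality in the present $R_{r,s}$-setting; once those are in place, all six cardinalities drop out as straightforward dimension counts.
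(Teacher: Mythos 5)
Your proof is correct; note, however, that the paper states Proposition \ref{Dimension_subcodes} with no proof at all, so there is no official argument to compare against. Your handling of the first four cardinalities is the expected one: $C_r$ and $C_s$ are cyclic with generators $\gcd(b(x),\ell(x))$ and $a(x)$ (as remarked after Theorem \ref{Z2submodulesform}), and the dual dimensions follow inside $\ZZ_2^r$ and $\ZZ_2^s$. For the last two, your route --- identifying $(C^\perp)_r$ and $(C^\perp)_s$ as the duals of the shortened codes $\pi_r(C')$ and $\pi_s(C'')$ via the puncturing/shortening duality of \cite{huffman}, then getting $\dim C' = r-\deg(b(x))$ from the structure theorem and $\dim C'' = s-\deg(a(x))-\kappa$ by rank--nullity applied to $\pi_r:C\to C_r$ together with the dimension corollary --- is sound and complete. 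The argument the authors presumably had in mind is instead to read everything off the standard-form generator matrix $G$ of the earlier proposition, whose row and column blocks exhibit $C_r$ and $C_s$ directly and whose associated standard-form parity-check matrix exhibits $(C^\perp)_r$ and $(C^\perp)_s$; yet another option is to show that $C''$ is generated by the codeword $({0}\mid \frac{b(x)}{\gcd(b(x),\ell(x))}a(x))$, which appears later in the proof of Proposition \ref{bar_a}. Your rank--nullity computation of $\dim C''$ neatly sidesteps having to prove any such generation statement. One caveat worth recording: like the paper itself (e.g., in Proposition \ref{BdiviLXS/A}, where $\ker(\pi)=\langle(b(x)\mid{0})\rangle$ is asserted without comment), you implicitly assume the presentation is the canonical one produced by Theorem \ref{Z2submodulesform}, i.e., that $(b(x)\mid {0})$ generates the shortened code $C'$; this is not automatic for an arbitrary pair of generators satisfying only the divisibility conditions $b(x)\mid(x^r-1)$ and $a(x)\mid(x^s-1)$, but since it is the paper's standing convention, it is not a gap in your argument.
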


\begin{coro}\label{deg_bar_b}
Let $C=\langle(b(x)\mid { 0}),(\ell(x)\mid  a(x))\rangle$ be a $\mathbb{Z}_2$-double cyclic code with dual code $C^\perp = \langle (\bar{b}(x)\mid { 0}), (\bar{\ell}(x) \mid  \bar{a}(x)) \rangle$. Then,
\begin{eqnarray*}
\deg(\bar{b}(x)) &=&  r - \deg(\gcd(b(x),\ell(x))).
\end{eqnarray*} 
\end{coro}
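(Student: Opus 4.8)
The goal is to compute $\deg(\bar{b}(x))$, the degree of the first generator of the dual code $C^\perp$. The plan is to exploit the fact, established in Proposition \ref{Dimension_subcodes}, that the projection $(C^\perp)_r$ has size $|(C^\perp)_r|=2^{\deg(b(x))}$, and to connect this size to the degree of $\bar{b}(x)$ through the analogue, for $C^\perp$, of the structural fact recorded just after Theorem \ref{Z2submodulesform}.

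First I would recall that for any $\mathbb{Z}_2$-double cyclic code written as $\langle(b(x)\mid 0),(\ell(x)\mid a(x))\rangle$, the remark following Theorem \ref{Z2submodulesform} tells us that the projection on the first $r$ coordinates is the binary cyclic code generated by $\gcd(b(x),\ell(x))$. Applying this same statement to the dual code $C^\perp=\langle(\bar{b}(x)\mid 0),(\bar{\ell}(x)\mid\bar{a}(x))\rangle$ gives that $(C^\perp)_r$ is the binary cyclic code generated by $\gcd(\bar{b}(x),\bar{\ell}(x))$. Since a binary cyclic code of length $r$ generated by a divisor $g(x)$ of $x^r-1$ has dimension $r-\deg(g(x))$, its cardinality is $2^{r-\deg(\gcd(\bar{b}(x),\bar{\ell}(x)))}$.

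Next I would combine the two expressions for $|(C^\perp)_r|$. On one side, Proposition \ref{Dimension_subcodes} gives $|(C^\perp)_r|=2^{\deg(b(x))}$; on the other side the cyclic-code dimension formula gives $|(C^\perp)_r|=2^{r-\deg(\gcd(\bar{b}(x),\bar{\ell}(x)))}$. Equating the exponents yields
$$\deg(\gcd(\bar{b}(x),\bar{\ell}(x)))=r-\deg(b(x)).$$
This is the natural analogue, for the dual, of the quantity $\deg(\gcd(b(x),\ell(x)))$ that governs $|C_r|$ in the original code.

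The remaining and genuinely delicate step is to pass from $\deg(\gcd(\bar{b}(x),\bar{\ell}(x)))$ to $\deg(\bar{b}(x))$ itself, which is where the asserted formula $\deg(\bar{b}(x))=r-\deg(\gcd(b(x),\ell(x)))$ comes from. I expect the main obstacle here to be showing that $\bar{b}(x)$ is, in effect, playing the dual role to $\gcd(b(x),\ell(x))$: one must relate $\deg(\bar{b}(x))$ back to the original code's data. The cleanest route I would attempt is to use $|C_r|=2^{r-\deg(b(x))+\kappa}=2^{r-\deg(\gcd(b(x),\ell(x)))}$ from Proposition \ref{Dimension_subcodes} (since $\kappa=\deg(b(x))-\deg(\gcd(b(x),\ell(x)))$), together with the standard duality for binary cyclic codes, whereby the generator polynomial of $(C_r)^\perp$ has degree equal to the dimension of $C_r$, namely $r-\deg(\gcd(b(x),\ell(x)))$. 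Since $|(C^\perp)_r|=2^{\deg(b(x))}$ matches the cardinality of a cyclic code whose generator has degree $r-\deg(b(x))$, and the complementary projection behaves reciprocally, I would argue that $\bar{b}(x)$ has degree exactly $r-\deg(\gcd(b(x),\ell(x)))$, consistent with the cardinality bookkeeping above; verifying that these degree counts are sharp — rather than mere inequalities — and that no cancellation inflates $\deg(\gcd(\bar{b}(x),\bar{\ell}(x)))$ relative to $\deg(\bar{b}(x))$ is the crux that must be handled with care.
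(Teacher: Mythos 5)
Your first three steps are correct, but they establish a fact about the wrong polynomial: equating $|(C^\perp)_r|=2^{\deg(b(x))}$ with $2^{r-\deg(\gcd(\bar{b}(x),\bar{\ell}(x)))}$ gives $\deg(\gcd(\bar{b}(x),\bar{\ell}(x)))=r-\deg(b(x))$, a statement about the gcd of the dual's generators, not about $\bar{b}(x)$ itself. Since $\gcd(\bar{b}(x),\bar{\ell}(x))$ divides $\bar{b}(x)$, all this yields is the inequality $\deg(\bar{b}(x))\geq r-\deg(b(x))$, which is strictly weaker than the claimed $\deg(\bar{b}(x))=r-\deg(\gcd(b(x),\ell(x)))$. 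The closing paragraph, where the real content of the corollary lies, is not a proof: phrases like ``the complementary projection behaves reciprocally'' and ``I would argue that'' never establish the single fact on which everything hinges, namely that $\bar{b}(x)$ is the generator polynomial of $(C_r)^\perp$ --- the \emph{dual of the projection}, which is a different code from the \emph{projection of the dual} $(C^\perp)_r$ that your computation uses (their sizes, $2^{\deg(\gcd(b,\ell))}$ and $2^{\deg(b)}$, differ in general). You correctly flag this identification as the crux, but you leave it unverified.

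The paper's proof is two lines once that identification is made. A word $(v\mid\mathbf{0})$ is orthogonal to all of $C$ if and only if $v$ is orthogonal to all of $C_r$, because the zero block kills the second contribution to the inner product; hence the subcode of $C^\perp$ with vanishing second component is exactly $(C_r)^\perp\times\{\mathbf{0}\}$, and by the structure theorem (Theorem \ref{Z2submodulesform} applied to $C^\perp$) this subcode is $\langle(\bar{b}(x)\mid 0)\rangle$. Therefore $(C_r)^\perp=\langle\bar{b}(x)\rangle$, so $|(C_r)^\perp|=2^{r-\deg(\bar{b}(x))}$, and comparing with $|(C_r)^\perp|=2^{\deg(b(x))-\kappa}=2^{\deg(\gcd(b(x),\ell(x)))}$ from Proposition \ref{Dimension_subcodes} gives the formula. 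Alternatively, your own bookkeeping can be closed without ever touching $\gcd(\bar{b},\bar{\ell})$: apply Proposition \ref{Dimension_subcodes} to the $\mathbb{Z}_2$-double cyclic code $C^\perp$ (whose generators are $\bar{b},\bar{\ell},\bar{a}$) and use $(C^\perp)^\perp=C$; this gives $|C_r|=2^{\deg(\bar{b}(x))}$, while $|C_r|=2^{r-\deg(\gcd(b(x),\ell(x)))}$, and equating exponents finishes the proof.
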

\begin{proof}
It is easy to prove that $(C_r)^\perp$ is a cyclic code generated by $\bar{b}(x)$, so $|(C_r)^\perp|=2^{r-\deg(\bar{b}(x))}$. Moreover, by Proposition \ref{Dimension_subcodes}, $|(C_r)^\perp|=2^{\deg(b(x))-\kappa}$.\\
Thus, $\deg(\bar{b}(x)) =  r-\deg(\gcd(b(x),\ell(x))).$
\end{proof}

\begin{coro}\label{deg_bar_a}
Let $C=\langle(b(x)\mid { 0}),(\ell(x)\mid  a(x))\rangle$ be a $\mathbb{Z}_2$-double cyclic code with dual code $C^\perp = \langle (\bar{b}(x)\mid { 0}), (\bar{\ell}(x) \mid  \bar{a}(x)) \rangle$. Then,
\begin{eqnarray*}
\deg(\bar{a}(x)) &=&  s-\deg(a(x))-\deg(b(x))+\deg(\gcd(b(x),\ell(x))).
\end{eqnarray*} 
\end{coro}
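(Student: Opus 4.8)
The plan is to mirror the proof of Corollary \ref{deg_bar_b}, exploiting the cardinality formula for $|(C^\perp)_s|$ supplied by Proposition \ref{Dimension_subcodes} together with the structural description of canonical projections recorded after Theorem \ref{Z2submodulesform}. The whole argument reduces to comparing two expressions for the size of the same code $(C^\perp)_s$ and reading off $\deg(\bar{a}(x))$.

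First I would observe that, by Proposition \ref{DualDoubleCyclic}, the dual $C^\perp$ is itself a $\mathbb{Z}_2$-double cyclic code written in the canonical form $C^\perp = \langle (\bar{b}(x)\mid 0), (\bar{\ell}(x) \mid \bar{a}(x)) \rangle$ with $\bar{a}(x)\mid(x^s-1)$. Hence the remark following Theorem \ref{Z2submodulesform}, applied to $C^\perp$ in place of $C$, tells us that the canonical projection $(C^\perp)_s$ on the last $s$ coordinates is precisely the binary cyclic code generated by $\bar{a}(x)$. Since a binary cyclic code of length $s$ generated by a divisor of $x^s-1$ has size $2^{s-\deg(\bar{a}(x))}$, this gives $|(C^\perp)_s| = 2^{s-\deg(\bar{a}(x))}$.

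Next I would invoke Proposition \ref{Dimension_subcodes}, which states $|(C^\perp)_s| = 2^{\deg(a(x))+\kappa}$ with $\kappa = \deg(b(x)) - \deg(\gcd(b(x),\ell(x)))$. Equating the two expressions for $|(C^\perp)_s|$ forces $s-\deg(\bar{a}(x)) = \deg(a(x))+\kappa$; solving for $\deg(\bar{a}(x))$ and substituting the value of $\kappa$ then yields $\deg(\bar{a}(x)) = s - \deg(a(x)) - \deg(b(x)) + \deg(\gcd(b(x),\ell(x)))$, which is the claimed identity.

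I do not anticipate a genuine obstacle: the statement is an elementary bookkeeping identity obtained from comparing two counts of the same projection. The only point deserving a moment of care is verifying that it is $\bar{a}(x)$, and not $\gcd(\bar{b}(x),\bar{\ell}(x))$, that generates $(C^\perp)_s$. This is immediate from the canonical form of the generators of $C^\perp$: the remark after Theorem \ref{Z2submodulesform} attributes the generator of the \emph{first} projection to the greatest common divisor of the two first-coordinate polynomials and the generator of the \emph{second} projection to the lone second-coordinate polynomial $\bar{a}(x)$, so no confusion arises.
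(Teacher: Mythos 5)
Your proposal is correct and follows exactly the paper's own argument: identify $(C^\perp)_s$ as the cyclic code generated by $\bar{a}(x)$ so that $|(C^\perp)_s| = 2^{s-\deg(\bar{a}(x))}$, then equate this with the count $|(C^\perp)_s| = 2^{\deg(a(x))+\kappa}$ from Proposition \ref{Dimension_subcodes} and solve for $\deg(\bar{a}(x))$. Your extra remark distinguishing the roles of $\bar{a}(x)$ and $\gcd(\bar{b}(x),\bar{\ell}(x))$ in the two projections is a correct reading of the remark following Theorem \ref{Z2submodulesform}, and the paper uses it implicitly.
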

\begin{proof}
Since $C^\perp$ is a $\ZZ_2$-double cyclic code, $(C^\perp)_s$ is a cyclic code generated by $\bar{a}(x)$, so $| (C^\perp)_s|=2^{s-\deg(\bar{a}(x))}$. Moreover, by Proposition \ref{Dimension_subcodes}, $|(C^\perp)_s|=2^{\deg(a(x))+\kappa}$.\\
Thus, $\deg(\bar{a}(x)) =  s-\deg(a(x))-\deg(b(x))+\deg(\gcd(b(x),\ell(x))).$
\end{proof}

\begin{prop}\label{CsDualInCDual}
Let $C=\langle (b(x)\mid { 0}), (\ell(x) \mid  a(x)) \rangle$ be a $\mathbb{Z}_2$-double cyclic code. Then, $\langle ({ 0} \mid  \frac{x^s-1}{a^*(x)}) \rangle\subseteq C^\perp.$
\end{prop}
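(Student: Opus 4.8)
The plan is to show that the single element ${\bf w}(x)=(0\mid \frac{x^s-1}{a^*(x)})$ lies in $C^\perp$; since $C^\perp$ is itself a $\ZZ_2[x]$-submodule of $R_{r,s}$ (as already established, duals of $\ZZ_2$-double cyclic codes are $\ZZ_2$-double cyclic), this immediately yields $\langle {\bf w}(x)\rangle\subseteq C^\perp$. First I would check that ${\bf w}(x)$ is well defined, i.e.\ that $a^*(x)\mid(x^s-1)$: over $\ZZ_2$ the polynomial $x^s-1$ is self-reciprocal, so from $a(x)\mid(x^s-1)$ one gets $a^*(x)\mid(x^s-1)^*=(x^s-1)$, and the quotient $\frac{x^s-1}{a^*(x)}$ is a genuine polynomial.

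To prove membership in $C^\perp$ I would invoke the proposition characterizing orthogonality through the bilinear map $\circ$: a vector ${\bf w}$ is orthogonal to a vector and all its shifts exactly when the corresponding value of $\circ$ vanishes modulo $x^{\emm}-1$. Because $C$ is spanned as a $\ZZ_2$-module by the shifts of its two generators $(b(x)\mid 0)$ and $(\ell(x)\mid a(x))$ (Proposition \ref{spaningset}), it suffices to verify the two identities ${\bf w}(x)\circ(b(x)\mid 0)=0$ and ${\bf w}(x)\circ(\ell(x)\mid a(x))=0$ modulo $x^{\emm}-1$, since orthogonality to each generator and all its shifts forces orthogonality to every codeword.

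The first identity is immediate from the definition of $\circ$: the first coordinate of ${\bf w}(x)$ is $0$, killing the unprimed summand, while the second coordinate of $(b(x)\mid 0)$ is $0$, killing the primed summand. For the second identity, again only the primed summand survives, and it equals $\frac{x^s-1}{a^*(x)}\,\theta_{\emm/s}(x^s)\,x^{\emm-1-\deg(a(x))}\,a^*(x)$. Here the factor $a^*(x)$ cancels against the denominator, and Proposition \ref{xnm=xntheta} (with $n=s$, $m=\emm/s$) rewrites $(x^s-1)\theta_{\emm/s}(x^s)$ as $x^{\emm}-1$; the whole expression is then a monomial multiple of $x^{\emm}-1$ and hence $0$ modulo $x^{\emm}-1$.

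I do not anticipate a serious obstacle: once the reduction to the two generators is in place, the argument is a short direct computation. The only points needing a little care are the well-definedness of $\frac{x^s-1}{a^*(x)}$ and the clean cancellation of $a^*(x)$ against the denominator, both of which are routine given the self-reciprocity of $x^s-1$ over $\ZZ_2$ and Proposition \ref{xnm=xntheta}.
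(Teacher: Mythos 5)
Your proof is correct, and it ends the same way as the paper's (membership of the single element $({0}\mid \frac{x^s-1}{a^*(x)})$ in $C^\perp$, followed by the submodule property of $C^\perp$ from Proposition \ref{DualDoubleCyclic}), but the orthogonality check itself is done by a different mechanism. The paper does not reduce to the generators: it invokes the classical fact that the dual of the binary cyclic code $C_s=\langle a(x)\rangle$ is $\langle\frac{x^s-1}{a^*(x)}\rangle$, and then argues for an \emph{arbitrary} codeword ${\bf v}(x)=(v(x)\mid v'(x))\in C$ that, since the first component of $({0}\mid\frac{x^s-1}{a^*(x)})$ is zero and $v'(x)\in C_s$, the pairing $({0}\mid\frac{x^s-1}{a^*(x)})\circ{\bf v}(x)$ vanishes modulo $x^{\emm}-1$. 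You instead bypass classical cyclic duality entirely: Proposition \ref{spaningset} reduces the problem to the two module generators, and the vanishing of $\circ$ is verified by explicit computation, with the cancellation $\frac{x^s-1}{a^*(x)}\cdot a^*(x)=x^s-1$ and Proposition \ref{xnm=xntheta} turning the surviving term into a monomial multiple of $x^{\emm}-1$. Your route is more self-contained---it needs no external fact about duals of cyclic codes, and it makes explicit both the cancellation and the well-definedness point $a^*(x)\mid x^s-1$, which the paper leaves tacit---at the cost of being longer; the paper's route is shorter and makes clear that the proposition is really classical cyclic duality transported into the $\circ$-formalism, a fact it reuses immediately afterwards in Corollary \ref{0CsDualGen}.
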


\begin{proof}

Since $C_s$ is a binary cyclic code generated by $\langle a(x) \rangle$, then $(C_s)^\perp = \langle  \frac{x^s-1}{a^*(x)}\rangle$.
Let ${\bf v}(x)=(v(x)\mid v'(x))\in C$. Then, $v'(x)\in C_s$ and $({ 0}\mid \frac{x^s-1}{a^*(x)})\circ {\bf v}(x) = 0 \mod (x^{\emm}-1).$ Thus, $\langle ({ 0} \mid  \frac{x^s-1}{a^*(x)}) \rangle\subseteq C^\perp.$
\end{proof}

\begin{coro}
Let $C=\langle (b(x)\mid { 0}), (\ell(x) \mid  a(x)) \rangle$ be a $\mathbb{Z}_2$-double cyclic code with $C^\perp = \langle (\bar{b}(x)\mid { 0}), (\bar{\ell}(x) \mid  \bar{a}(x)) \rangle.$ Then, $\bar{a}(x)$ divides $\frac{x^s-1}{a^*(x)}.$
\end{coro}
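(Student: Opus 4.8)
The plan is to read this off almost immediately from Proposition \ref{CsDualInCDual} together with the description of the canonical projection onto the last $s$ coordinates. First I would recall that Proposition \ref{CsDualInCDual} already gives the containment $\langle ({ 0} \mid \frac{x^s-1}{a^*(x)}) \rangle \subseteq C^\perp$, so in particular the single element $({ 0} \mid \frac{x^s-1}{a^*(x)})$ lies in $C^\perp$. The strategy is then to push this element forward under the canonical projection $\pi_s : R_{r,s} \rightarrow \ZZ_2[x]/(x^s-1)$ and exploit what is known about the image of $C^\perp$ under $\pi_s$.

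Next I would invoke the observation established just after Theorem \ref{Z2submodulesform}: for any $\ZZ_2$-double cyclic code written in the standard generating form, the projection onto the last $s$ coordinates is a binary cyclic code generated by the third generator. Applied to $C^\perp = \langle (\bar{b}(x)\mid { 0}), (\bar{\ell}(x) \mid \bar{a}(x)) \rangle$ (which is itself $\ZZ_2$-double cyclic by Proposition \ref{DualDoubleCyclic}), this yields $(C^\perp)_s = \pi_s(C^\perp) = \langle \bar{a}(x)\rangle$, with $\bar{a}(x)\mid (x^s-1)$. This is exactly the same identification that was already used in the proof of Corollary \ref{deg_bar_a}, so it is available without further work.

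Combining the two steps, applying $\pi_s$ to $({ 0} \mid \frac{x^s-1}{a^*(x)}) \in C^\perp$ gives $\frac{x^s-1}{a^*(x)} \in (C^\perp)_s = \langle \bar{a}(x)\rangle$. Since every element of a binary cyclic code generated by a divisor $\bar{a}(x)$ of $x^s-1$ is a polynomial multiple of $\bar{a}(x)$, I would conclude directly that $\bar{a}(x)$ divides $\frac{x^s-1}{a^*(x)}$, which is the claim. The only point that genuinely needs care—and what I would regard as the main (mild) obstacle—is justifying that $\pi_s(C^\perp)$ is precisely the cyclic code generated by $\bar{a}(x)$; but this rests on the standard projection fact noted after Theorem \ref{Z2submodulesform} and on the fact that $a^*(x)\mid (x^s-1)$ (so that $\frac{x^s-1}{a^*(x)}$ is a well-defined polynomial), both of which are already in hand.
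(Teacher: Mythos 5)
Your proposal is correct and matches the paper's (implicit) argument: the corollary is stated without proof precisely because it follows immediately from Proposition \ref{CsDualInCDual} by projecting $({0}\mid \frac{x^s-1}{a^*(x)})\in C^\perp$ onto the last $s$ coordinates and using that $(C^\perp)_s$ is the cyclic code generated by $\bar{a}(x)$ — the same identification the paper itself uses in the proof of Corollary \ref{deg_bar_a}. Your care about why membership in $\langle \bar{a}(x)\rangle$ gives polynomial divisibility (the generator of a cyclic code divides every codeword polynomial) is exactly the standard fact being relied on.
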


\begin{coro}\label{0CsDualGen}
Let $C=\langle (b(x)\mid { 0}), (\ell(x) \mid  a(x)) \rangle$ be a $\mathbb{Z}_2$-double cyclic code. Let $T=\{({ 0}\mid p(x))\in C^\perp \} $. Then, $T$ is generated by $\langle ({ 0} \mid  \frac{x^s-1}{a^*(x)}) \rangle.$
\end{coro}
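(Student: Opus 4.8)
The plan is to establish the two inclusions $\langle (0\mid \frac{x^s-1}{a^*(x)})\rangle \subseteq T$ and $T\subseteq \langle (0\mid \frac{x^s-1}{a^*(x)})\rangle$ separately. The first is immediate from the results already available: by Proposition~\ref{CsDualInCDual} the element $(0\mid \frac{x^s-1}{a^*(x)})$ lies in $C^\perp$, and since its first component is $0$ it lies in $T$; therefore the whole submodule it generates is contained in $T$. So the content of the corollary is the reverse inclusion.

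For the reverse inclusion I would take an arbitrary $(0\mid p(x))\in T\subseteq C^\perp$ and show that $\frac{x^s-1}{a^*(x)}\mid p(x)$ in $\ZZ_2[x]/(x^s-1)$. Because $(0\mid p(x))$ is a dual codeword, it is orthogonal to $(\ell(x)\mid a(x))$ and to all of its shifts, all of which belong to $C$. By the proposition characterizing orthogonality through the bilinear map $\circ$, this gives $(0\mid p(x))\circ (\ell(x)\mid a(x))=0 \mod(x^{\emm}-1)$. Applying Lemma~\ref{Lemma1} in the case $u(x)=0$ then yields $p(x)a^*(x)=0\mod(x^s-1)$, that is, $(x^s-1)\mid p(x)a^*(x)$ in $\ZZ_2[x]$.

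The step requiring a little care is turning this divisibility into $\frac{x^s-1}{a^*(x)}\mid p(x)$, and I would carry it out by lifting to the integral domain $\ZZ_2[x]$. Since $a(x)\mid(x^s-1)$ and reciprocation carries divisors of $x^s-1$ to divisors of $(x^s-1)^*=x^s-1$, we have $a^*(x)\mid(x^s-1)$, so $x^s-1=a^*(x)\cdot\frac{x^s-1}{a^*(x)}$ is a genuine factorization. Rewriting $(x^s-1)\mid p(x)a^*(x)$ as $a^*(x)\frac{x^s-1}{a^*(x)}\mid p(x)a^*(x)$ and cancelling the common factor $a^*(x)$, which is legitimate in the integral domain $\ZZ_2[x]$, gives $\frac{x^s-1}{a^*(x)}\mid p(x)$. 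Hence $(0\mid p(x))\in \langle (0\mid \frac{x^s-1}{a^*(x)})\rangle$, which finishes the second inclusion. This cancellation is really the only delicate point; everything else is an invocation of the machinery built in Proposition~\ref{CsDualInCDual} and Lemma~\ref{Lemma1}.

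As an independent verification, and as a backup should the divisibility manipulation be considered too terse, I would run a cardinality argument. The restriction of $\pi_r$ to $C^\perp$ has kernel $T$ and image $(C^\perp)_r$, so $|T|=|C^\perp|/|(C^\perp)_r|$. Using $|C^\perp|=2^{r+s}/|C|=2^{\deg(b(x))+\deg(a(x))}$ from the dimension corollary, together with $|(C^\perp)_r|=2^{\deg(b(x))}$ from Proposition~\ref{Dimension_subcodes}, one obtains $|T|=2^{\deg(a(x))}$. On the other hand $\frac{x^s-1}{a^*(x)}$ has degree $s-\deg(a(x))$, because $\deg(a^*(x))=\deg(a(x))$ (the constant term of $a(x)$ is nonzero, as $a(x)\mid x^s-1$), so the cyclic code it generates in $\ZZ_2[x]/(x^s-1)$ has dimension $\deg(a(x))$ and size $2^{\deg(a(x))}$. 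Since the generated submodule is contained in $T$ and the two have equal finite cardinality, they coincide, giving the claim by a second route.
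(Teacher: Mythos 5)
Your proof is correct, and both of your arguments are sound, but the route you take for the reverse inclusion is genuinely different from the paper's. The paper argues by projection: for $(0\mid p(x))\in T$ it observes that $p(x)\in T_s\subseteq (C_s)^\perp=\langle\frac{x^s-1}{a^*(x)}\rangle$, quoting the classical description of the dual of a binary cyclic code, and then lifts $p(x)=\lambda(x)\frac{x^s-1}{a^*(x)}$ back to $(0\mid p(x))=\lambda(x)\star(0\mid\frac{x^s-1}{a^*(x)})$. You instead stay inside the polynomial pairing machinery: you pair $(0\mid p(x))$ against the single generator $(\ell(x)\mid a(x))$ with $\circ$, invoke Lemma \ref{Lemma1} (the case $u(x)=0$) to get $p(x)a^*(x)\equiv 0 \pmod{x^s-1}$, and cancel $a^*(x)$ in the integral domain $\ZZ_2[x]$ --- a legitimate step, carefully justified by the fact that $a^*(x)\mid x^s-1$. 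Your version re-derives by hand what the cyclic-duality fact gives for free, which makes it slightly longer but self-contained, and it is exactly the mechanism the paper itself deploys later in Propositions \ref{bar_b} and \ref{bar_a}; the paper's projection argument is shorter but leans on the external fact $(C_s)^\perp=\langle\frac{x^s-1}{a^*(x)}\rangle$, which it needs anyway for Proposition \ref{CsDualInCDual}. Your backup counting argument is also valid and fully independent: it uses only Proposition \ref{Dimension_subcodes} (which the paper states without proof) and the dimension corollary to get $|T|=2^{\deg(a(x))}$, matches this against the size of the contained submodule $\langle(0\mid\frac{x^s-1}{a^*(x)})\rangle$, and concludes equality from finiteness; either argument alone would suffice as a complete proof.
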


\begin{proof}
Let $T=\{({ 0}\mid p(x))\in C^\perp \}$. By Proposition \ref{CsDualInCDual}, we have that $\langle({ 0} \mid  \frac{x^s-1}{a^*(x)})\rangle\subseteq T$.\\
Since $T_s\subseteq (C_s)^\perp =\langle\frac{x^s-1}{a^*(x)}\rangle$, for all $({ 0}\mid p(x))\in T$ we have that $p(x)\in \langle\frac{x^s-1}{a^*(x)}\rangle$. Hence, there exists $\lambda(x)\in\ZZ_2[x]$ such that 
$p(x)=\lambda(x)\frac{x^s-1}{a^*(x)}$.
Therefore, for all $({ 0}\mid p(x))\in T$ we have that
$$({ 0}\mid p(x))=({ 0}\mid \lambda(x)\frac{x^s-1}{a^*(x)})=\lambda(x)\star({ 0}\mid \frac{x^s-1}{a^*(x)}).$$ 
So, $T\subseteq \langle({ 0} \mid  \frac{x^s-1}{a^*(x)})\rangle.$
\end{proof}

The previous propositions and corollaries will be helpful to determine the relations between the generator polynomials of a $\mathbb{Z}_2$-double cyclic code and the generator polynomials of its dual code. 
\begin{prop}\label{bar_b}
Let $C=\langle(b(x)\mid { 0}),(\ell(x)\mid  a(x))\rangle$ be a $\mathbb{Z}_2$-double cyclic code and $C^\perp = \langle (\bar{b}(x)\mid { 0}), (\bar{\ell}(x) \mid  \bar{a}(x)) \rangle$. Then,
$$\bar{b}(x)=\frac{x^r-1}{\gcd(b(x),\ell(x))^*}.$$
\end{prop}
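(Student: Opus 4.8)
The plan is to recognize $\bar b(x)$ as the generator polynomial of the binary cyclic code $(C_r)^\perp$ and then to compute that generator explicitly from classical cyclic-code duality. First I would record the two facts that make this reduction possible. On one side, it was observed right after Theorem \ref{Z2submodulesform} that $C_r=\langle\gcd(b(x),\ell(x))\rangle$. On the other side, I would argue that $(C_r)^\perp=\langle\bar b(x)\rangle$: a polynomial $(p(x)\mid {0})$ lies in $C^\perp$ precisely when $p(x)\in(C_r)^\perp$, because for any ${\bf v}(x)=(v(x)\mid v'(x))\in C$ the pairing $(p(x)\mid {0})\circ{\bf v}(x)$ only involves the first block $v(x)$, and the first blocks of codewords of $C$ range over all of $C_r$. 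Hence the subcode $\{(p(x)\mid {0})\in C^\perp\}$ projects isomorphically onto $(C_r)^\perp$, and by the structure of Theorem \ref{Z2submodulesform} its first-coordinate generator is exactly $\bar b(x)$; this is the "easy" fact already invoked in the proof of Corollary \ref{deg_bar_b}.

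Next I would apply the standard duality for binary cyclic codes (see \cite{macwilliams}). Writing $g(x)=\gcd(b(x),\ell(x))$, we have $g(x)\mid(x^r-1)$, so the parity-check polynomial is $h(x)=\frac{x^r-1}{g(x)}$, and the dual of $C_r=\langle g(x)\rangle$ is generated by the reciprocal of $h(x)$, i.e. $(C_r)^\perp=\langle h^*(x)\rangle$. It then remains to simplify $h^*(x)$. For this I would use that reciprocation is multiplicative, $(fg)^*=f^*g^*$, which holds because the degrees of the factors add; applying it to the factorization $x^r-1=g(x)h(x)$ gives $(x^r-1)^*=g^*(x)\,h^*(x)$. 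Over $\ZZ_2$ one checks directly that $(x^r-1)^*=x^r-1$, so $h^*(x)=\frac{x^r-1}{g^*(x)}=\frac{x^r-1}{\gcd(b(x),\ell(x))^*}$.

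Finally I would match generators: both $\bar b(x)$ and $\frac{x^r-1}{\gcd(b(x),\ell(x))^*}$ are divisors of $x^r-1$ generating the single cyclic code $(C_r)^\perp$, and over $\ZZ_2$ the generator polynomial of a cyclic code is unique, so the two coincide. As a consistency check, $\deg\!\left(\frac{x^r-1}{\gcd(b(x),\ell(x))^*}\right)=r-\deg(\gcd(b(x),\ell(x)))$, which agrees with Corollary \ref{deg_bar_b}. I expect no serious obstacle here: the entire argument is elementary once the identification $(C_r)^\perp=\langle\bar b(x)\rangle$ is in hand, and the only point demanding a little care is the bookkeeping with reciprocal polynomials — verifying $(x^r-1)^*=x^r-1$ and that no degree cancellation spoils $(fg)^*=f^*g^*$, both of which are routine because divisors of $x^r-1$ have nonzero constant term.
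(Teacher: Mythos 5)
Your proof is correct, but it follows a genuinely different route from the paper's. The paper's own proof stays inside its new machinery: it pairs $(\bar{b}(x)\mid 0)$ against both generators of $C$ via the form $\circ$, applies Lemma \ref{Lemma1} to obtain $b(x)\bar{b}^*(x)\equiv 0$ and $\ell(x)\bar{b}^*(x)\equiv 0 \pmod{x^r-1}$, combines these into $\gcd(b(x),\ell(x))\,\bar{b}^*(x)\equiv 0 \pmod{x^r-1}$, and then uses the degree formula of Corollary \ref{deg_bar_b} (which rests on the dimension count of Proposition \ref{Dimension_subcodes}) to force the product to equal $x^r-1$ exactly, reading off $\bar{b}^*(x)=\frac{x^r-1}{\gcd(b(x),\ell(x))}$. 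You instead reduce the whole statement to classical binary cyclic code theory: you identify $(C_r)^\perp=\langle\bar{b}(x)\rangle$ (the same ``easy'' identification the paper invokes inside Corollary \ref{deg_bar_b}, and your orthogonality argument for it is sound, since the first blocks of codewords of $C$ are exactly $C_r$), recall $C_r=\langle\gcd(b(x),\ell(x))\rangle$, and then quote the textbook fact that the dual of $\langle g(x)\rangle$ is generated by $h^*(x)$ with $h(x)=\frac{x^r-1}{g(x)}$, finishing by uniqueness of the generator polynomial among divisors of $x^r-1$. Both arguments are complete. What yours buys is economy and independence: it shows this particular proposition needs neither Lemma \ref{Lemma1} nor Proposition \ref{Dimension_subcodes}, and your explicit reciprocal bookkeeping ($(fg)^*=f^*g^*$ because degrees add over a field, and $(x^r-1)^*=x^r-1$ over $\ZZ_2$) actually fills in a step the paper glosses over when passing from $\bar{b}^*$ to $\bar{b}$. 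What the paper's route buys is uniformity: the $\circ$/Lemma \ref{Lemma1} technique is the one that must be used for $\bar{a}(x)$ and $\bar{\ell}(x)$ (Propositions \ref{bar_a} and \ref{bar_ell}), where no classical shortcut is available, so the authors prove all three generator formulas by the same mechanism.
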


\begin{proof}
We have that $(\bar{b}(x)\mid { 0})$ belongs to $C^\perp$. Then,
\begin{align*}
(b(x)\mid { 0})\circ (\bar{b}(x)\mid { 0}) = & 0 \mod(x^{\emm}-1),\\
(\ell(x)\mid a(x))\circ (\bar{b}(x)\mid { 0}) = & 0 \mod(x^{\emm}-1).
\end{align*}
Therefore, by Lemma \ref{Lemma1},
\begin{align*}
b(x)\bar{b}^*(x)= & 0 \mod(x^r-1),\\
\ell (x)\bar{b}^*(x)= & 0 \mod(x^r-1).
\end{align*}
So, $ \gcd(b(x),\ell(x))\bar{b}^*(x)= 0 \mod(x^r-1)$, and there exist $\mu(x)\in \ZZ_2[x]$ such that $ \gcd(b(x),\ell(x))\bar{b}^*(x)= \mu(x)(x^r-1)$. \\
Moreover, since $\gcd(b(x),\ell(x))$ and $\bar{b}^*(x)$ divides $(x^r-1)$, by Corollary \ref{deg_bar_b}, we have that $\deg(\bar{b}(x)) =  r-\deg(\gcd(b(x),\ell(x))).$ Then,
$$\bar{b}^*(x)=\frac{x^r-1}{\gcd(b(x),\ell(x))}.$$
\end{proof}

\begin{prop}\label{bar_a}
Let $C=\langle(b(x)\mid { 0}),(\ell(x)\mid  a(x))\rangle$ be a $\mathbb{Z}_2$-double cyclic code and $C^\perp = \langle (\bar{b}(x)\mid { 0}), (\bar{\ell}(x) \mid  \bar{a}(x)) \rangle$. Then,
$$\bar{a}(x)=\frac{(x^s-1)\gcd(b(x),\ell(x))^*}{a^*(x)b^*(x)}.$$
\end{prop}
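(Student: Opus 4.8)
Write $d(x)=\gcd(b(x),\ell(x))$, and let $g(x)=\frac{(x^s-1)d(x)}{a(x)b(x)}$, so that the asserted value of $\bar a(x)$ is exactly $g^*(x)$. The plan is to reduce the whole claim to the membership statement $g^*(x)\in(C^\perp)_s$, which I will then settle by constructing an explicit codeword. First I would record the elementary facts about reciprocals that I need: all of $a,b,d$ and the cofactors below divide some $x^n-1$, hence have nonzero constant term, so $(\cdot)^*$ is multiplicative and degree-preserving here, $x^s-1$ is self-reciprocal over $\ZZ_2$, and $d^*\mid b^*$. By Corollary \ref{bdiviXSgcd}, $b(x)\mid\frac{x^s-1}{a(x)}d(x)$, so $g(x)$ is a genuine polynomial; applying $(\cdot)^*$ to $\frac{x^s-1}{a}d=bg$ gives $(x^s-1)d^*=a^*b^*g^*$, so $g^*$ is a polynomial and $g^*\cdot\frac{a^*b^*}{d^*}=x^s-1$, the second factor being a polynomial since $d^*\mid b^*$. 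Thus $g^*(x)\mid (x^s-1)$.

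Next I would match degrees. Since reciprocation preserves degree, $\deg g^*=\deg g=s+\deg d-\deg a-\deg b$, which is precisely the value of $\deg\bar a(x)$ furnished by Corollary \ref{deg_bar_a}. Because $(C^\perp)_s$ is the binary cyclic code generated by $\bar a(x)$, as soon as I know $g^*\in(C^\perp)_s$ I get $\bar a(x)\mid g^*(x)$; together with the equality of degrees and the fact that both are monic over $\ZZ_2$, this forces $\bar a(x)=g^*(x)$. So the entire problem collapses to showing $g^*(x)\in(C^\perp)_s$, i.e. to exhibiting $\bar\ell'(x)$ with $(\bar\ell'(x)\mid g^*(x))\in C^\perp$.

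The hard part is this construction, and this is where I expect the main obstacle. Motivated by Lemma \ref{Lemma1}, orthogonality of $(\bar\ell'\mid g^*)$ to $(b(x)\mid 0)$ and its shifts amounts to $b^*(x)\bar\ell'(x)\equiv 0\pmod{x^r-1}$, so I would set $\bar\ell'=\frac{x^r-1}{b^*}h$ for an unknown $h$; with this choice the product $(\bar\ell'\mid g^*)\circ(b\mid 0)$ collapses to $0$ directly. It then remains to force orthogonality to $(\ell(x)\mid a(x))$. Expanding $(\bar\ell'\mid g^*)\circ(\ell\mid a)$ and using Proposition \ref{xnm=xntheta} (so that $\theta_{\emm/s}(x^s)(x^s-1)=x^\emm-1$ and $\frac{x^\emm-1}{b^*}=\frac{x^r-1}{b^*}\theta_{\emm/r}(x^r)$), the two terms combine into $\frac{x^\emm-1}{b^*}\bigl(h\,\ell^*x^{\emm-1-\deg\ell}+d^*x^{\emm-1-\deg a}\bigr)$, so the requirement becomes $b^*\mid\bigl(h\,\ell^*x^{\emm-1-\deg\ell}+d^*x^{\emm-1-\deg a}\bigr)$. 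Writing $b^*=d^*b'^*$ and $\ell^*=d^*\ell'^*$ with $b'=b/d$, $\ell'=\ell/d$ and $\gcd(b'^*,\ell'^*)=1$, I would cancel $d^*$ and the unit power of $x$ (legitimate since $\gcd(x,b^*)=1$) to reduce to $h\,\ell'^*\equiv x^{\deg\ell-\deg a}\pmod{b'^*}$. Since $\ell'^*$ is invertible modulo $b'^*$, such an $h$ exists, yielding $\bar\ell'$ and hence $(\bar\ell'\mid g^*)\in C^\perp$; this gives $g^*\in(C^\perp)_s$ and closes the argument. The delicate points are carrying out the orthogonality expansion accurately and justifying the coprimality $\gcd(b'^*,\ell'^*)=1$ that makes the final congruence solvable; everything else is degree bookkeeping resting on Corollaries \ref{bdiviXSgcd} and \ref{deg_bar_a}.
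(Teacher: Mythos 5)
Your proof is correct, but it takes a genuinely different route from the paper's. Write $d=\gcd(b,\ell)$ and let $g^*(x)=\frac{(x^s-1)d^*(x)}{a^*(x)b^*(x)}$ denote the claimed value of $\bar{a}(x)$, as in your notation. The paper argues top-down: since $(\bar{\ell}(x)\mid\bar{a}(x))$ is already known to be part of a generating set of $C^\perp$, it pairs this generator via $\circ$ with the explicit codeword $\frac{b}{d}\star(\ell\mid a)-\frac{\ell}{d}\star(b\mid 0)=(0\mid \frac{b(x)a(x)}{d(x)})\in C$ and applies Lemma \ref{Lemma1} to obtain $\bar{a}(x)\frac{a^*(x)b^*(x)}{d^*(x)}\equiv 0\pmod{x^s-1}$, i.e.\ $g^*(x)\mid\bar{a}(x)$; Corollary \ref{deg_bar_a} then forces the quotient to be $1$. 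You argue bottom-up and obtain the \emph{opposite} divisibility $\bar{a}(x)\mid g^*(x)$: you construct a codeword $(\frac{x^r-1}{b^*}h\mid g^*)\in C^\perp$ by expanding the two orthogonality conditions and solving $h\,\ell'^*\equiv x^{\deg\ell-\deg a}\pmod{b'^*}$ (with $b'=b/d$, $\ell'=\ell/d$), which puts $g^*$ inside $(C^\perp)_s=\langle\bar{a}(x)\rangle$, and you close with the same degree count from Corollary \ref{deg_bar_a}. Your expansion of $\circ$ (the two terms do combine into $\frac{x^{\emm}-1}{b^*}\bigl(h\ell^*x^{\emm-1-\deg\ell}+d^*x^{\emm-1-\deg a}\bigr)$), the coprimality $\gcd(b'^*,\ell'^*)=1$, and the degree bookkeeping all check out, so there is no gap. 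The trade-off is this: the congruence you solve is exactly the content the paper defers to Proposition \ref{condition_lambda} and Corollary \ref{lambda_ell} (your $h$ is their $\lambda$, your $\ell'^*$ their $\rho^*$), so your argument for this single proposition effectively also derives the paper's later description of $\bar{\ell}(x)$; the paper's proof is shorter precisely because it only exploits the existence of the dual generating set guaranteed by Proposition \ref{DualDoubleCyclic}, and never needs to construct a dual codeword or invoke any invertibility.
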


\begin{proof}
Consider the codeword $$\frac{b(x)}{\gcd(b(x),\ell(x))}\star(\ell(x) \mid  a(x))-\frac{\ell(x)}{\gcd(b(x),\ell(x))}\star(b(x)\mid { 0})=({ 0}\mid \frac{b(x)}{\gcd(b(x),\ell(x))}a(x)).$$
Then,
\begin{align*}
(\bar{\ell}(x)\mid \bar{a}(x))\circ ({ 0}\mid \frac{b(x)}{\gcd(b(x),\ell(x))}a(x)) = 0 \mod(x^{\emm}-1).
\end{align*}
Thus, by Lemma \ref{Lemma1}
$$\bar{a}(x) \frac{a^*(x)b^*(x)}{\gcd(b(x),\ell(x))^*}  =  0 \mod(x^s-1),$$
and
$$ \bar{a}(x)\frac{a^*(x)b^*(x)}{\gcd(b(x),\ell(x))^*}=(x^s-1)\mu(x),$$
for some $\mu(x)\in\ZZ_2[x]$. It is known that $\bar{a}(x)$ is a divisor of $x^s-1$ and, by Corollary \ref{bdiviXSgcd}, we have that $\frac{a^*(x)b^*(x)}{\gcd(b(x),\ell(x))^*}$ divides $(x^s-1)$. By Corollary \ref{deg_bar_a}, $\deg(\bar{a}(x))=s-\deg(a(x))-\deg(b(x))+\deg(\gcd(b(x),\ell(x)))$, so 
$$s=\deg\left(\bar{a}(x)\frac{a^*(x)b^*(x)}{\gcd(b(x),\ell(x))^*}\right)=\deg((x^s-1)).$$
Hence, we obtain that $\mu(x)=1$ and
$$ \bar{a}(x)=\frac{(x^s-1)\gcd(b(x),\ell(x))^*}{a^*(x)b^*(x)}.$$
\end{proof}

\begin{prop}
Let $C=\langle(b(x)\mid { 0}),(\ell(x)\mid  a(x))\rangle$ be a separable $\mathbb{Z}_2$-double cyclic code. Then, $C^\perp=\langle (\frac{x^r-1}{b^*(x)}\mid { 0}), ({ 0} \mid  \frac{x^s-1}{a^*(x)}) \rangle$.
\end{prop}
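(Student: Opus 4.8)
The plan is to reduce everything to facts already established, namely the formulas for $\bar b(x)$ and $\bar a(x)$, and then to pin down the missing $\ell$-component of the dual. First I would invoke the Proposition stating that a separable code satisfies $\ell(x)=0$; thus $C=\langle(b(x)\mid 0),(0\mid a(x))\rangle$ and $\gcd(b(x),\ell(x))=b(x)$. Substituting $\gcd(b(x),\ell(x))=b(x)$ into Proposition \ref{bar_b} gives $\bar b(x)=\frac{x^r-1}{b^*(x)}$, and substituting $\gcd(b(x),\ell(x))^*=b^*(x)$ into Proposition \ref{bar_a} gives $\bar a(x)=\frac{(x^s-1)b^*(x)}{a^*(x)b^*(x)}=\frac{x^s-1}{a^*(x)}$. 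So the two stated generators are exactly the correct $\bar b(x)$ and $\bar a(x)$; the only remaining point is that the $\ell$-part of the second dual generator vanishes, i.e. $\bar\ell(x)=0$.

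To obtain $\bar\ell(x)=0$ I would show that the separable code $D=\langle(\frac{x^r-1}{b^*(x)}\mid 0),(0\mid\frac{x^s-1}{a^*(x)})\rangle$ is contained in $C^\perp$ and has the same dimension as $C^\perp$, so that $D=C^\perp$. Since $D$ is already written in separable form, this simultaneously identifies all three generators and forces $\bar\ell(x)=0$, so that no separate argument about separability of the dual is needed. The containment $(0\mid\frac{x^s-1}{a^*(x)})\in C^\perp$ is immediate from Proposition \ref{CsDualInCDual}. For the other generator I would use the orthogonality criterion (the Proposition characterizing ${\bf u}(x)\circ{\bf v}(x)=0$): by Proposition \ref{spaningset} the shifts of $(b(x)\mid 0)$ and $(0\mid a(x))$ span $C$ over $\ZZ_2$, so it suffices to check that $(\frac{x^r-1}{b^*(x)}\mid 0)\circ(b(x)\mid 0)$ and $(\frac{x^r-1}{b^*(x)}\mid 0)\circ(0\mid a(x))$ are $0$ modulo $x^{\emm}-1$. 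The second is trivially zero. In the first, the relevant term is $\frac{x^r-1}{b^*(x)}\,\theta_{\frac{\emm}{r}}(x^r)\,x^{\emm-1-\deg(b(x))}\,b^*(x)=(x^r-1)\theta_{\frac{\emm}{r}}(x^r)\,x^{\emm-1-\deg(b(x))}$, which vanishes modulo $x^{\emm}-1$ because $(x^r-1)\theta_{\frac{\emm}{r}}(x^r)=x^{\emm}-1$ by Proposition \ref{xnm=xntheta}.

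Finally I would compare dimensions. By the Corollary giving $\dim C=r+s-\deg(b(x))-\deg(a(x))$ we have $\dim C^\perp=(r+s)-\dim C=\deg(b(x))+\deg(a(x))$. Applying the same Corollary to $D$, and using that $b(x)\mid x^r-1$ and $a(x)\mid x^s-1$ have nonzero constant term so that $\deg(b^*(x))=\deg(b(x))$ and $\deg(a^*(x))=\deg(a(x))$, gives $\dim D=r+s-\bigl(r-\deg(b(x))\bigr)-\bigl(s-\deg(a(x))\bigr)=\deg(b(x))+\deg(a(x))$. Since $D\subseteq C^\perp$ and the dimensions coincide, $D=C^\perp$, which is the claimed description.

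The step I expect to be the genuine obstacle is the verification that $(\frac{x^r-1}{b^*(x)}\mid 0)\in C^\perp$, that is, the bilinear-form computation: it is the only place where the definition of $\circ$ and the identity of Proposition \ref{xnm=xntheta} really enter, and the cancellation $\frac{x^r-1}{b^*(x)}\,b^*(x)=x^r-1$ is what makes the $\theta$-factor collapse. Everything else is either a direct substitution into Propositions \ref{bar_b} and \ref{bar_a} or a routine dimension count, and the conclusion $\bar\ell(x)=0$ follows for free once $D=C^\perp$ is established.
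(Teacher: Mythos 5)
Your proof is correct, but there is nothing in the paper to compare it against line by line: the paper states this proposition with no proof at all, evidently regarding it as immediate from what precedes. The intended reading is presumably your first paragraph --- separability forces $\ell(x)=0$, hence $\gcd(b(x),\ell(x))=b(x)$, and substitution into Propositions \ref{bar_b} and \ref{bar_a} yields $\bar{b}(x)=\frac{x^r-1}{b^*(x)}$ and $\bar{a}(x)=\frac{x^s-1}{a^*(x)}$ --- together with the unstated fact that the dual of a direct product $C_r\times C_s$ is $C_r^\perp\times C_s^\perp$, which is what forces $\bar{\ell}(x)=0$ (note that Proposition \ref{bar_ell} is stated only for non-separable codes, so it cannot be invoked here). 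Your containment-plus-dimension argument fills exactly this silent gap, and does so entirely with the paper's own tools: $({0}\mid \frac{x^s-1}{a^*(x)})\in C^\perp$ by Proposition \ref{CsDualInCDual}; $(\frac{x^r-1}{b^*(x)}\mid {0})\in C^\perp$ by the orthogonality criterion, Proposition \ref{spaningset}, and the collapse $\frac{x^r-1}{b^*(x)}b^*(x)\theta_{\frac{\emm}{r}}(x^r)=x^{\emm}-1$ from Proposition \ref{xnm=xntheta}; and $\dim D=\deg(b(x))+\deg(a(x))=\dim C^\perp$ by the dimension corollary, whose application to $D$ is legitimate since $\frac{x^r-1}{b^*(x)}$ and $\frac{x^s-1}{a^*(x)}$ divide $x^r-1$ and $x^s-1$ respectively, and reciprocals preserve degrees because divisors of $x^n-1$ have nonzero constant term. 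What your route buys is self-containment (no appeal to the product-of-duals fact, which the paper never proves) and, as a byproduct, the corollary that follows in the paper --- the dual of a separable code is separable --- since $D$ is manifestly separable. The only redundancy is your first paragraph: once $D\subseteq C^\perp$ and the dimension count are in place, $D=C^\perp$ identifies all three generators at once, so the derivation of $\bar{b}(x)$ and $\bar{a}(x)$ from Propositions \ref{bar_b} and \ref{bar_a} is not needed.
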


\begin{coro}
Let $C$ be a separable $\mathbb{Z}_2$-double cyclic code. Then, $C^\perp$ is a separable $\mathbb{Z}_2$-double cyclic code.
\end{coro}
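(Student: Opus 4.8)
The plan is to read off $\bar{b}(x)$ and $\bar{a}(x)$ from the results already available in the separable case, exhibit the two claimed generators as genuine elements of $C^\perp$, and then close the argument with a dimension count rather than by appealing to the separability of $C^\perp$ (which is only established in the corollary that follows this proposition).

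First I would record that, since $C$ is separable, the earlier proposition forces $\ell(x)=0$, whence $\gcd(b(x),\ell(x))=b(x)$. Substituting this into Proposition \ref{bar_b} gives $\bar{b}(x)=\frac{x^r-1}{b^*(x)}$, so that $(\frac{x^r-1}{b^*(x)}\mid { 0})=(\bar{b}(x)\mid { 0})\in C^\perp$ by Proposition \ref{DualDoubleCyclic}. For the second generator, Proposition \ref{CsDualInCDual} already guarantees $({ 0}\mid \frac{x^s-1}{a^*(x)})\in C^\perp$ with no separability hypothesis; as a consistency check one may also verify via Proposition \ref{bar_a} and $\gcd(b(x),\ell(x))^*=b^*(x)$ that $\bar{a}(x)=\frac{x^s-1}{a^*(x)}$. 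Consequently,
$$\langle (\tfrac{x^r-1}{b^*(x)}\mid { 0}),\ ({ 0}\mid \tfrac{x^s-1}{a^*(x)})\rangle\subseteq C^\perp.$$

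To obtain the reverse inclusion I would compare dimensions. Since $b^*(x)\mid (x^r-1)$ (reciprocation permutes the divisors of $x^r-1$ over $\mathbb{Z}_2$), the cyclic code $\langle \frac{x^r-1}{b^*(x)}\rangle$ of length $r$ has a generator of degree $r-\deg(b(x))$, hence dimension $\deg(b(x))$; likewise $\langle \frac{x^s-1}{a^*(x)}\rangle$ has dimension $\deg(a(x))$. The displayed code is the direct product of these two cyclic codes, since its generators have disjoint supports, so its dimension is $\deg(b(x))+\deg(a(x))$. On the other hand, the corollary giving $\dim C=r+s-\deg(b(x))-\deg(a(x))$ yields $\dim C^\perp=\deg(b(x))+\deg(a(x))$. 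The two dimensions agree, so the inclusion above is an equality.

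The one point requiring care is that I must not invoke the separability of $C^\perp$, since that is precisely the content of the subsequent corollary; the dimension count is exactly what lets me sidestep it. I expect the only (mild) obstacle to be the bookkeeping with reciprocal polynomials—in particular justifying $\deg(\frac{x^r-1}{b^*(x)})=r-\deg(b(x))$ and that $b^*(x)$ still divides $x^r-1$—but both follow from $\deg(b^*(x))=\deg(b(x))$ together with $(x^r-1)^*=x^r-1$ over $\mathbb{Z}_2$, a fact the paper already uses implicitly in the duality results.
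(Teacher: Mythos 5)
Your argument is correct and non-circular; the main thing to flag is a framing slip rather than a mathematical one. The statement you were asked about is the corollary ($C^\perp$ is separable), yet your remark about ``the corollary that follows this proposition'' shows you took the target to be the preceding proposition, namely $C^\perp=\langle (\frac{x^r-1}{b^*(x)}\mid 0),(0\mid\frac{x^s-1}{a^*(x)})\rangle$. Fortunately your proof delivers both: once $C^\perp$ is identified with the submodule generated by $(\frac{x^r-1}{b^*(x)}\mid 0)$ and $(0\mid\frac{x^s-1}{a^*(x)})$, that submodule equals $\langle\frac{x^r-1}{b^*(x)}\rangle\times\langle\frac{x^s-1}{a^*(x)}\rangle$, so $C^\perp$ is the direct product of its projections; that last inference is exactly the definition of separable and deserves one explicit sentence in your write-up. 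As for the comparison: the paper proves neither the proposition nor the corollary, and its implicit route is the short conceptual one --- separability gives $C=C_r\times C_s$, hence $C^\perp=(C_r)^\perp\times(C_s)^\perp$ by the standard fact on duals of direct products, and the factors are the cyclic duals $\langle\frac{x^r-1}{b^*(x)}\rangle$ and $\langle\frac{x^s-1}{a^*(x)}\rangle$, so the dual is visibly separable. Your route instead stays entirely inside the paper's machinery: $\ell(x)=0$ makes $\gcd(b(x),\ell(x))=b(x)$, so Proposition \ref{bar_b} yields $(\frac{x^r-1}{b^*(x)}\mid 0)\in C^\perp$, Proposition \ref{CsDualInCDual} yields $(0\mid\frac{x^s-1}{a^*(x)})\in C^\perp$, and the count $\dim C^\perp=r+s-\dim C=\deg(b(x))+\deg(a(x))$ forces the inclusion to be an equality. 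What you gain is self-containedness (no appeal to direct-product duality, which the paper never states) together with a proof of the proposition itself; what you spend is length and a dependence on the unproven Proposition \ref{Dimension_subcodes}, which feeds Corollary \ref{deg_bar_b} and hence Proposition \ref{bar_b}. Those dependencies hold for arbitrary $\mathbb{Z}_2$-double cyclic codes, so there is no circularity, and your bookkeeping with reciprocals ($\deg(b^*(x))=\deg(b(x))$ and $b^*(x)\mid x^r-1$, both valid because $b(x)\mid x^r-1$ implies $b(0)\neq 0$) is sound.
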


\begin{prop}\label{bar_ell}
Let $C=\langle(b(x)\mid { 0}),(\ell(x)\mid  a(x))\rangle$ be a non separable $\mathbb{Z}_2$-double cyclic code and $C^\perp = \langle (\bar{b}(x)\mid { 0}), (\bar{\ell}(x) \mid  \bar{a}(x)) \rangle$. Then,
$$\bar{\ell}(x)=\frac{x^r-1}{b^*(x)}\lambda(x),$$
for some $\lambda(x)\in \mathbb{Z}_2[x].$
\end{prop}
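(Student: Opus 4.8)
The plan is to extract a divisibility condition on $\bar{\ell}(x)$ by pairing the dual generator $(\bar{\ell}(x)\mid \bar{a}(x))$ against the codeword $(b(x)\mid 0)$ of $C$, in direct analogy with the treatment of $\bar{b}(x)$ in Proposition \ref{bar_b}. Since $(\bar{\ell}(x)\mid \bar{a}(x))\in C^\perp$, this element is orthogonal to every codeword of $C$ and to all its shifts; in particular it is orthogonal to $(b(x)\mid 0)$ and all of its shifts. By the proposition characterizing orthogonality through the bilinear map $\circ$, this is equivalent to
$$(\bar{\ell}(x)\mid \bar{a}(x))\circ(b(x)\mid 0)=0\mod(x^{\emm}-1).$$

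Next I would invoke Lemma \ref{Lemma1}. Because the second coordinate of $(b(x)\mid 0)$ vanishes, the lemma applies with $u(x)=\bar{\ell}(x)$ and $v(x)=b(x)$, giving
$$\bar{\ell}(x)b^*(x)=0\mod(x^r-1),$$
that is, $(x^r-1)\mid \bar{\ell}(x)b^*(x)$.

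The final step is a cancellation argument in the integral domain $\mathbb{Z}_2[x]$. Since $b(x)\mid(x^r-1)$, taking reciprocals and using that $(x^r-1)^*=x^r-1$ in $\mathbb{Z}_2[x]$ shows $b^*(x)\mid(x^r-1)$, so $\frac{x^r-1}{b^*(x)}$ is a genuine polynomial and $x^r-1=b^*(x)\cdot\frac{x^r-1}{b^*(x)}$. Writing $\bar{\ell}(x)b^*(x)=(x^r-1)\lambda(x)=b^*(x)\frac{x^r-1}{b^*(x)}\lambda(x)$ and cancelling the nonzero factor $b^*(x)$ yields $\bar{\ell}(x)=\frac{x^r-1}{b^*(x)}\lambda(x)$, which is exactly the claim.

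I expect no serious obstacle here, as the argument is a direct analogue of Proposition \ref{bar_b}; the only point requiring care is justifying that $b^*(x)$ divides $x^r-1$, so that the cancellation of $b^*(x)$ is legitimate and $\frac{x^r-1}{b^*(x)}$ really is a polynomial. I would also remark that the separability hypothesis plays no role in the divisibility itself: it merely places the statement in the non-separable regime, where $\bar{\ell}(x)$ need not be zero, the separable case being already settled by the earlier proposition describing $C^\perp$ explicitly.
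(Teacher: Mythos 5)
Your proposal is correct and follows essentially the same route as the paper: pair a dual codeword against $(b(x)\mid 0)$, obtain $(\bar{\ell}(x)\mid \bar{a}(x))\circ(b(x)\mid 0)=0$, apply Lemma \ref{Lemma1} to get $\bar{\ell}(x)b^*(x)=0 \bmod (x^r-1)$, and conclude by divisibility. The only cosmetic difference is that the paper reaches the key identity via bilinearity applied to $\bar{c}(x)=(\bar{b}(x)\mid 0)+(\bar{\ell}(x)\mid \bar{a}(x))$, whereas you use directly that the generator $(\bar{\ell}(x)\mid \bar{a}(x))$ lies in $C^\perp$; your explicit justification that $b^*(x)\mid(x^r-1)$ and the cancellation in $\mathbb{Z}_2[x]$ is a point the paper leaves implicit.
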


\begin{proof}
Let $\bar{c}\in C^\perp$ with $\bar{c}(x)=(\bar{b}(x)\mid 0)+(\bar{\ell}(x)\mid \bar{a}(x))$. Then
\begin{align*}
\bar{c}(x)\circ (b(x)\mid { 0}) = &  ((\bar{b}(x)\mid { 0}))\circ(b(x)\mid { 0}) + ((\bar{\ell}(x)\mid \bar{a}(x)))\circ(b(x)\mid { 0})\\
  = & 0 + ((\bar{\ell}(x)\mid \bar{a}(x)))\circ(b(x)\mid { 0})\\
  = & 0 \mod(x^{\emm}-1).
\end{align*}
So, by Lemma \ref{Lemma1}

$$\bar{\ell}(x)b^*(x)=0\mod(x^r-1)$$
and

$$\bar{\ell}(x)=\frac{x^r-1}{b^*(x)}\lambda(x).$$
\end{proof}

\begin{coro}
Let $C=\langle(b(x)\mid { 0}),(\ell(x)\mid  a(x))\rangle$ be a non separable $\mathbb{Z}_2$-double cyclic code. Then, $\deg(\lambda(x))<\deg(b(x))-\deg(\gcd(b(x),\ell(x))).$
\end{coro}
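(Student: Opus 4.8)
The plan is to recover the bound by reading off two degrees and comparing them, exactly mirroring the normalization already carried out for the primal code. First I would invoke the earlier normalization result (the proposition stating that for any $\mathbb{Z}_2$-double cyclic code we may assume $\deg(\ell(x))<\deg(b(x))$). Since by Proposition \ref{DualDoubleCyclic} the dual $C^\perp=\langle (\bar{b}(x)\mid 0),(\bar{\ell}(x)\mid \bar{a}(x))\rangle$ is itself a $\mathbb{Z}_2$-double cyclic code, that same normalization applies to its generators, so I may assume $\deg(\bar{\ell}(x))<\deg(\bar{b}(x))$. The whole corollary will then follow by translating this single inequality into a statement about $\lambda(x)$.

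Next I would compute each side explicitly. For $\deg(\bar{b}(x))$, Corollary \ref{deg_bar_b} gives directly $\deg(\bar{b}(x))=r-\deg(\gcd(b(x),\ell(x)))$. For $\deg(\bar{\ell}(x))$, I would use the factorization $\bar{\ell}(x)=\frac{x^r-1}{b^*(x)}\lambda(x)$ from Proposition \ref{bar_ell}. Here one must first check that $\frac{x^r-1}{b^*(x)}$ is a genuine polynomial of degree $r-\deg(b(x))$: since $b(x)\mid(x^r-1)$ and $x^r-1$ has nonzero constant term over $\mathbb{Z}_2$, we have $b(0)=1$, so $\deg(b^*(x))=\deg(b(x))$; and because the reciprocal map is multiplicative and fixes $x^r-1$, the polynomial $b^*(x)$ also divides $x^r-1$. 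Hence $\deg\!\left(\frac{x^r-1}{b^*(x)}\right)=r-\deg(b(x))$ and therefore $\deg(\bar{\ell}(x))=r-\deg(b(x))+\deg(\lambda(x))$.

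Finally I would substitute both expressions into $\deg(\bar{\ell}(x))<\deg(\bar{b}(x))$, i.e. $r-\deg(b(x))+\deg(\lambda(x))<r-\deg(\gcd(b(x),\ell(x)))$, and cancel the common term $r$ to obtain $\deg(\lambda(x))<\deg(b(x))-\deg(\gcd(b(x),\ell(x)))$, which is exactly the claim. I expect the main obstacle to be the justification of the first step rather than the arithmetic: the bound is vacuous without the normalization $\deg(\bar{\ell}(x))<\deg(\bar{b}(x))$, so the crux is to argue cleanly that the earlier normalization proposition legitimately transfers to $C^\perp$ (via the fact that the dual is again $\mathbb{Z}_2$-double cyclic, and that non-separability guarantees $\bar{\ell}(x)\neq 0$ so the degree comparison is meaningful), after which the degree bookkeeping is entirely routine.
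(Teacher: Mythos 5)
Your proposal is correct and follows what is evidently the paper's intended derivation (the corollary is stated there without proof): apply the earlier normalization $\deg(\bar{\ell}(x))<\deg(\bar{b}(x))$ to $C^\perp$, which is legitimate because $C^\perp$ is again a $\mathbb{Z}_2$-double cyclic code, and combine Corollary \ref{deg_bar_b} with Proposition \ref{bar_ell} together with $\deg\bigl(\frac{x^r-1}{b^*(x)}\bigr)=r-\deg(b(x))$ to get the bound. Your side checks --- that $b^*(x)\mid x^r-1$ with $\deg(b^*(x))=\deg(b(x))$ since $b(0)=1$, and that non-separability of $C$ forces $C^\perp$ to be non-separable, so $\bar{\ell}(x)\neq 0$ and the degree comparison is meaningful --- are precisely the details needed to make this routine bookkeeping rigorous.
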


\begin{prop}\label{condition_lambda}
Let $C=\langle(b(x)\mid { 0}),(\ell(x)\mid  a(x))\rangle$ be a non separable $\mathbb{Z}_2$-double cyclic code and $C^\perp = \langle (\bar{b}(x)\mid { 0}), (\bar{\ell}(x) \mid  \bar{a}(x)) \rangle$. Let $\rho(x)=\frac{\ell(x)}{\gcd(b(x),\ell(x))}$ and $\bar{\ell}(x)=\frac{x^r-1}{b^*(x)}\lambda(x).$ Then,
{\small $$\frac{(x^\emm-1)\gcd^*(b(x),\ell(x))}{b^*(x)}\left( \lambda(x)x^{\emm-\deg(\ell(x))-1}\rho^*(x)+x^{\emm-\deg(a(x))-1}\right)=0\mod (x^\emm-1).$$}
Thus,
$$\left( \lambda(x)x^{\emm-\deg(\ell(x))-1}\rho^*(x)+x^{\emm-\deg(a(x))-1}\right)=0\mod \left(\frac{b^*(x)}{\gcd^*(b(x),\ell(x))}\right).$$
\end{prop}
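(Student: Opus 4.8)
The plan is to exploit that the generator $(\bar\ell(x)\mid\bar a(x))$ of $C^\perp$ must be orthogonal to the codeword $(\ell(x)\mid a(x))$ of $C$ together with all of its cyclic shifts, all of which lie in $C$ because $C$ is $\ZZ_2$-double cyclic. By the proposition characterizing orthogonality through the bilinear map $\circ$, this is equivalent to the single identity
$$(\bar\ell(x)\mid\bar a(x))\circ(\ell(x)\mid a(x))=0\mod(x^{\emm}-1).$$
First I would write out the left-hand side from the definition of $\circ$. This produces two summands, one carrying $\theta_{\emm/r}(x^r)$ with factor $\bar\ell(x)$ and the other carrying $\theta_{\emm/s}(x^s)$ with factor $\bar a(x)$; the relevant exponents are $\emm-\deg(\ell(x))-1$ and $\emm-\deg(a(x))-1$, exactly as in the statement.

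Next I would substitute the closed forms already obtained for the dual generators: $\bar\ell(x)=\frac{x^r-1}{b^*(x)}\lambda(x)$ from Proposition \ref{bar_ell} and $\bar a(x)=\frac{(x^s-1)\gcd(b(x),\ell(x))^*}{a^*(x)b^*(x)}$ from Proposition \ref{bar_a}. The simplification rests on three facts. By Proposition \ref{xnm=xntheta} we have $(x^r-1)\theta_{\emm/r}(x^r)=x^{\emm}-1$ and $(x^s-1)\theta_{\emm/s}(x^s)=x^{\emm}-1$, which collapse the $\theta$ factors against the numerators $x^r-1$ and $x^s-1$. Writing $\ell(x)=\gcd(b(x),\ell(x))\rho(x)$ and using that reciprocation is multiplicative over $\mathbb{Z}_2[x]$ gives $\ell^*(x)=\gcd(b(x),\ell(x))^*\rho^*(x)$. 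In the first summand these reductions yield $\frac{(x^{\emm}-1)\gcd(b(x),\ell(x))^*}{b^*(x)}\lambda(x)x^{\emm-\deg(\ell(x))-1}\rho^*(x)$, and in the second summand the factor $a^*(x)$ cancels, leaving $\frac{(x^{\emm}-1)\gcd(b(x),\ell(x))^*}{b^*(x)}x^{\emm-\deg(a(x))-1}$. Factoring out $\frac{(x^{\emm}-1)\gcd(b(x),\ell(x))^*}{b^*(x)}$ and setting the result to zero modulo $x^{\emm}-1$ produces precisely the first displayed identity.

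Finally, to pass to the second identity I would divide out the common factor $x^{\emm}-1$. Setting $d(x)=\frac{b^*(x)}{\gcd(b(x),\ell(x))^*}$, this is a genuine polynomial dividing $x^r-1$, hence dividing $x^{\emm}-1$, so $\frac{x^{\emm}-1}{d(x)}$ is a polynomial. The first identity says that $x^{\emm}-1$ divides $\frac{x^{\emm}-1}{d(x)}\bigl(\lambda(x)x^{\emm-\deg(\ell(x))-1}\rho^*(x)+x^{\emm-\deg(a(x))-1}\bigr)$; cancelling the nonzero factor $\frac{x^{\emm}-1}{d(x)}$ in the integral domain $\mathbb{Z}_2[x]$ leaves $d(x)$ dividing the parenthesized expression, which is the asserted congruence modulo $\frac{b^*(x)}{\gcd(b(x),\ell(x))^*}$.

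I expect the only delicate point to be the bookkeeping of exponents and reciprocals in the two summands, in particular confirming that the $\theta$ substitutions align with the numerators $x^r-1$ and $x^s-1$ and that the cancellation of $a^*(x)$ is exact, together with justifying the final division by $x^{\emm}-1$. The latter is legitimate precisely because $\frac{b^*(x)}{\gcd(b(x),\ell(x))^*}$ divides $x^{\emm}-1$ (via $b^*(x)\mid x^r-1\mid x^{\emm}-1$) and $\mathbb{Z}_2[x]$ has no zero divisors.
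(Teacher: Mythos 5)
Your proposal is correct and follows essentially the same route as the paper: both use the orthogonality relation $(\bar{\ell}(x)\mid \bar{a}(x))\circ(\ell(x)\mid a(x))=0 \bmod (x^{\emm}-1)$, substitute the closed forms of $\bar{\ell}(x)$ and $\bar{a}(x)$, collapse the $\theta$ factors via Proposition \ref{xnm=xntheta}, and then pass to the congruence modulo $\frac{b^*(x)}{\gcd^*(b(x),\ell(x))}$. In fact your write-up makes explicit the computation the paper compresses into ``arranging properly,'' and your final step (cancelling $\frac{x^{\emm}-1}{d(x)}$ in the integral domain $\mathbb{Z}_2[x]$) is a cleaner justification than the paper's disjunctive phrasing of the same deduction.
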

\begin{proof}
Let $\rho(x)=\frac{\ell(x)}{\gcd(b(x),\ell(x))}$. Computing $(\bar{\ell}(x) \mid  \bar{a}(x))\circ(\ell(x)\mid  a(x))$ and arranging properly we obtain that 
{\small $$\frac{(x^\emm-1)\gcd^*(b(x),\ell(x))}{b^*(x)}\left( \lambda(x)x^{\emm-\deg(\ell(x))-1}\rho^*(x)+x^{\emm-\deg(a(x))-1}\right),$$}
that is equal $0\mod (x^\emm-1)$. Then,
\begin{equation}\label{equ1}
\left( \lambda(x)x^{\emm-\deg(\ell(x))-1}\rho^*(x)+x^{\emm-\deg(a(x))-1}\right)=0\mod (x^\emm-1),
\end{equation}
or
{\small
\begin{equation}\label{equ2}
 \left( \lambda(x)x^{\emm-\deg(\ell(x))-1}\rho^*(x)+x^{\emm-\deg(a(x))-1}\right)=0\mod \left(\frac{b^*(x)}{\gcd^*(b(x),\ell(x))}\right).
\end{equation}
}
Since $\frac{b^*(x)}{\gcd^*(b(x),\ell(x))}$ divides $x^\emm - 1$, clearly (\ref{equ1}) implies (\ref{equ2}).
\end{proof}
\begin{coro}\label{lambda_ell}
Let $C=\langle(b(x)\mid { 0}),(\ell(x)\mid  a(x))\rangle$ be a non separable $\mathbb{Z}_2$-double cyclic code and $C^\perp = \langle (\bar{b}(x)\mid { 0}), (\bar{\ell}(x) \mid  \bar{a}(x)) \rangle$. Let $\rho(x)=\frac{\ell(x)}{\gcd(b(x),\ell(x))}$ and $\bar{\ell}(x)=\frac{x^r-1}{b^*(x)}\lambda(x).$ Then,
$$\lambda(x)=x^{\emm-\deg(a(x))+\deg(\ell(x))}(\rho^*(x))^{-1}\mod\left(\frac{b^*(x)}{\gcd^*(b(x),\ell(x))}\right).$$
\end{coro}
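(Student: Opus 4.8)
The plan is to take the congruence (\ref{equ2}) already proved in Proposition \ref{condition_lambda} and simply solve it for $\lambda(x)$; the entire content of the corollary is that the two factors multiplying $\lambda(x)$ are units in the quotient ring modulo $\frac{b^*(x)}{\gcd^*(b(x),\ell(x))}$, so that this inversion is legitimate. Writing $g(x)=\gcd(b(x),\ell(x))$ and working in $\ZZ_2[x]$ (where $-1=1$), congruence (\ref{equ2}) reads
$$\lambda(x)\,x^{\emm-\deg(\ell(x))-1}\rho^*(x)= x^{\emm-\deg(a(x))-1}\mod\left(\frac{b^*(x)}{g^*(x)}\right),$$
so it suffices to invert $x^{\emm-\deg(\ell(x))-1}$ and $\rho^*(x)$ modulo $\frac{b^*(x)}{g^*(x)}$.

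First I would dispose of the power of $x$. By Proposition \ref{condition_lambda}, $\frac{b^*(x)}{g^*(x)}$ divides $x^{\emm}-1$, which has nonzero constant term; hence $\frac{b^*(x)}{g^*(x)}$ has nonzero constant term as well, and $x$ is a unit modulo it. The same divisibility gives $x^{\emm}\equiv 1 \mod\left(\frac{b^*(x)}{g^*(x)}\right)$, a fact I will reuse at the very end. This allows me to move the factor $x^{-(\emm-\deg(\ell(x))-1)}$ to the right-hand side.

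The key step, and the only place I expect genuine work, is the invertibility of $\rho^*(x)$. The standard identity $\gcd\!\left(\frac{\ell(x)}{g(x)},\frac{b(x)}{g(x)}\right)=1$ gives $\gcd\!\left(\rho(x),\frac{b(x)}{g(x)}\right)=1$, and I must promote this to a statement about the reciprocals. Reciprocation $p\mapsto p^*$ is multiplicative and involutive on polynomials with nonzero constant term and maps the irreducible factors of $x^{\emm}-1$ (which is self-reciprocal over $\ZZ_2$) bijectively to irreducible factors, hence preserves coprimality; a common irreducible divisor of $\rho^*(x)$ and $\left(\frac{b(x)}{g(x)}\right)^*=\frac{b^*(x)}{g^*(x)}$ would, on applying $*$ again, produce a common divisor of $\rho(x)$ and $\frac{b(x)}{g(x)}$, a contradiction. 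A possible factor $x$ in $\ell(x)$ causes no trouble, since it is coprime to the divisor $\frac{b(x)}{g(x)}$ of $x^{\emm}-1$ and does not affect $\rho^*(x)$. Thus $\gcd\!\left(\rho^*(x),\frac{b^*(x)}{g^*(x)}\right)=1$ and $\rho^*(x)$ is a unit modulo $\frac{b^*(x)}{g^*(x)}$.

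Finally I would solve for $\lambda(x)$, cancelling the invertible factors to obtain
$$\lambda(x)=x^{\deg(\ell(x))-\deg(a(x))}(\rho^*(x))^{-1}\mod\left(\frac{b^*(x)}{g^*(x)}\right),$$
since $\emm-\deg(a(x))-1-(\emm-\deg(\ell(x))-1)=\deg(\ell(x))-\deg(a(x))$. Using $x^{\emm}\equiv 1$ modulo $\frac{b^*(x)}{g^*(x)}$ from the second paragraph, I may add $\emm$ to the exponent to make it nonnegative, replacing $\deg(\ell(x))-\deg(a(x))$ by $\emm+\deg(\ell(x))-\deg(a(x))$ and recovering $g^*(x)=\gcd^*(b(x),\ell(x))$ in the modulus, which is precisely the stated formula.
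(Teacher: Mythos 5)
Your proposal is correct and takes essentially the same route as the paper's own proof: both start from congruence (\ref{equ2}) of Proposition \ref{condition_lambda}, use that $x^{\emm}\equiv 1 \mod \left(\frac{b^*(x)}{\gcd^*(b(x),\ell(x))}\right)$, and invert $\rho^*(x)$ modulo $\frac{b^*(x)}{\gcd^*(b(x),\ell(x))}$ to solve for $\lambda(x)$. The only difference is cosmetic: you shift exponents by dividing by the unit $x^{\emm-\deg(\ell(x))-1}$ rather than multiplying through by $x^{\deg(\ell(x))+1}$, and you spell out why coprimality of $\rho(x)$ and $\frac{b(x)}{\gcd(b(x),\ell(x))}$ transfers to their reciprocals, a point the paper asserts without justification.
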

\begin{proof}
Let $\rho(x)=\frac{\ell(x)}{\gcd(b(x),\ell(x))}$. By Proposition \ref{condition_lambda},
$$\left( \lambda(x)x^{\emm-\deg(\ell(x))-1}\rho^*(x)+x^{\emm-\deg(a(x))-1}\right)=0\mod \left(\frac{b^*(x)}{\gcd^*(b(x),\ell(x))}\right).$$
Then,
$$\lambda(x)x^{\emm}\rho^*(x) = x^{\emm-\deg(a(x))+\deg(\ell(x))}\mod \left(\frac{b^*(x)}{\gcd^*(b(x),\ell(x))}\right).$$
On the one hand, we have that $x^\emm=1\mod \left(\frac{b^*(x)}{\gcd^*(b(x),\ell(x))}\right).$ On the other hand, the great common divisor between $\rho(x)$ and $\frac{b(x)}{\gcd(b(x),\ell(x))}$ is $1$, then $\rho^*(x)$ is an invertible element modulo $\left(\frac{b^*(x)}{\gcd^*(b(x),\ell(x))}\right)$. Thus,
$$\lambda(x)=x^{\emm-\deg(a(x))+\deg(\ell(x))}(\rho^*(x))^{-1}\mod\left(\frac{b^*(x)}{\gcd^*(b(x),\ell(x))}\right).$$
\end{proof}

We summarize the previous results in the next theorem.

\begin{theo}
Let $C=\langle(b(x)\mid { 0}),(\ell(x)\mid  a(x))\rangle$ be a $\mathbb{Z}_2$-double cyclic code and $C^\perp = \langle (\bar{b}(x)\mid { 0}), (\bar{\ell}(x) \mid  \bar{a}(x)) \rangle$. Let $\rho(x)=\frac{\ell(x)}{\gcd(b(x),\ell(x))}$ and $\bar{\ell}(x)=\frac{x^r-1}{b^*(x)}\lambda(x).$ Then,
\begin{enumerate}
\item $\bar{b}(x)=\frac{x^r-1}{\gcd(b(x),\ell(x))^*},$
\item $\bar{a}(x)=\frac{(x^s-1)\gcd(b(x),\ell(x))^*}{a^*(x)b^*(x)},$
\item $\bar{\ell}(x)=\frac{x^r-1}{b^*(x)}\lambda(x)$, where
$$\lambda(x)x^{\emm}\rho^*(x) = x^{\emm-\deg(a(x))+\deg(\ell(x))}\mod \left(\frac{b^*(x)}{\gcd^*(b(x),\ell(x))}\right).$$
\end{enumerate}
\end{theo}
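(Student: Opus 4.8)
The plan is to assemble this theorem directly from the propositions and corollaries established earlier in this section, since each of the three formulas has already been proved, or is an immediate consequence of, an individual result. First I would observe that item~1 is precisely the statement of Proposition~\ref{bar_b} and item~2 is precisely Proposition~\ref{bar_a}, so for these two there is nothing further to do beyond citing them; both are proved there for an arbitrary $\ZZ_2$-double cyclic code $C$, so they transfer to the present setting without any hypothesis. The only genuine content to reorganize is item~3.

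For item~3 I would proceed in two steps. First, invoking Proposition~\ref{bar_ell}, the generator $\bar{\ell}(x)$ of $C^\perp$ must have the form $\frac{x^r-1}{b^*(x)}\lambda(x)$ for some $\lambda(x)\in\ZZ_2[x]$; this fixes the shape of $\bar{\ell}$ and is already recorded in the hypotheses of the theorem. Second, to pin down $\lambda(x)$ I would use the orthogonality between the generators of $C$ and $C^\perp$: since $(\bar{\ell}(x)\mid\bar{a}(x))\in C^\perp$ and $(\ell(x)\mid a(x))\in C$, the pairing satisfies $(\bar{\ell}(x)\mid\bar{a}(x))\circ(\ell(x)\mid a(x))=0 \mod (x^\emm-1)$. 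Expanding this pairing and substituting the formula for $\bar{a}(x)$ from item~2 together with $\rho(x)=\frac{\ell(x)}{\gcd(b(x),\ell(x))}$ is exactly the computation carried out in Proposition~\ref{condition_lambda}, which yields $\lambda(x)x^{\emm-\deg(\ell(x))-1}\rho^*(x)+x^{\emm-\deg(a(x))-1}\equiv 0 \mod\left(\frac{b^*(x)}{\gcd^*(b(x),\ell(x))}\right)$; multiplying through by $x^{\deg(\ell(x))+1}$ and using that $-1=+1$ over $\ZZ_2$ gives the displayed relation $\lambda(x)x^{\emm}\rho^*(x)=x^{\emm-\deg(a(x))+\deg(\ell(x))}$ in item~3. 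Equivalently, Corollary~\ref{lambda_ell} already records the solved form of $\lambda(x)$, so I could simply quote that.

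The step I expect to require the most care is not any new calculation but the consistency bookkeeping and the treatment of the separable case. I must check, using Corollary~\ref{deg_bar_b} and Corollary~\ref{deg_bar_a} for the degrees and Corollary~\ref{bdiviXSgcd} for divisibility, that $\bar{b}(x)$, $\bar{a}(x)$ and $\frac{x^r-1}{b^*(x)}$ are genuine polynomials dividing the appropriate $x^n-1$, so that the three right-hand sides are well defined. I would then note the separable case separately: when $C$ is separable we have $\ell(x)=0$, hence $\gcd(b(x),\ell(x))=b(x)$, the modulus $\frac{b^*(x)}{\gcd^*(b(x),\ell(x))}$ collapses to $1$, the congruence in item~3 becomes vacuous, and $\bar{\ell}(x)=0$, recovering the generators already computed in the earlier proposition describing $C^\perp$ for separable $C$. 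Apart from these verifications the theorem is a restatement, and the proof reduces to a direct appeal to Propositions~\ref{bar_b}, \ref{bar_a}, \ref{bar_ell} and~\ref{condition_lambda}.
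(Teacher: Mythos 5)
Your proposal is correct and takes essentially the same route as the paper: the theorem there carries no separate proof at all (it is introduced with ``We summarize the previous results in the next theorem'') and is exactly the concatenation of Propositions \ref{bar_b}, \ref{bar_a}, \ref{bar_ell} and \ref{condition_lambda}, with Corollary \ref{lambda_ell} supplying the solved form of $\lambda(x)$. Your extra bookkeeping on degrees, divisibility, and the separable case (where the modulus collapses to $1$ and $\bar{\ell}(x)=0$) goes slightly beyond the paper, which leaves those checks implicit even though Propositions \ref{bar_ell} and \ref{condition_lambda} are stated only for non-separable codes.
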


\end{document}